\newcommand{\ba}{\begin{eqnarray}}
\newcommand{\ea}{\end{eqnarray}}
\newcommand{\ban}{\begin{eqnarray*}}
\newcommand{\ean}{\end{eqnarray*}}
\newcommand{\ket}[1]{\mbox{$ | #1 \rangle $}}
\newcommand{\bra}[1]{\mbox{$ \langle #1 | $}}
\newcommand{\ketbra}[2]{\ket{#1} \! \bra{#2}}
\newcommand{\pure}[1]{\ketbra{#1}{#1}}
\newcommand{\trace}{\mathrm{Tr}}
\newcommand{\id}{\mathbb{I}} 
\newcommand{\pr}{\mathrm{P}} 
\newcommand{\projec}[1]{\mbox{$ | #1 \rangle\! \langle #1 |$}}
\newcommand{\proj}{P} 
\newcommand{\flip}[1]{\overline{#1}} 
\newcommand{\defvar}{\coloneqq} 
\newcommand{\entVN}{H}
\newcommand{\Bkey}{\ensuremath{\chat{\rv{B}}'_1}}
\newcommand{\Bkeyreg}{\ensuremath{\hat{\rv{B}}'_1}}
\newcommand{\rv}[1]{\textsf{{\itshape #1}}} 
\newcommand*{\chat}[1]{\hspace{0.15em}\hat{\phantom{#1}}\kern-0.8em #1}
\newcommand{\be}{\begin{equation}}
\newcommand{\ee}{\end{equation}}
\newtheorem{theorem}{Proposition}
\begin{document}
\title{Noisy pre-processing facilitating a photonic realisation of \\ device-independent quantum key distribution}


\date{\today}

\author{M. Ho}
\affiliation{Department of Physics, University of Basel, Klingelbergstrasse 82, 4056 Basel, Switzerland}
\affiliation{Department of Applied Physics, University of Geneva,
Chemin de Pinchat 22, 1211 Geneva, Switzerland}
\author{P. Sekatski}
\affiliation{Department of Physics, University of Basel, Klingelbergstrasse 82, 4056 Basel, Switzerland}
\author{E.Y.-Z. Tan}
\affiliation{Institute for Theoretical Physics, ETH Zürich, 8093 Zürich, Switzerland}
\author{R. Renner}
\affiliation{Institute for Theoretical Physics, ETH Zürich, 8093 Zürich, Switzerland}
\author{J.-D. Bancal}
\affiliation{Department of Physics, University of Basel, Klingelbergstrasse 82, 4056 Basel, Switzerland}
\affiliation{Department of Applied Physics, University of Geneva, Chemin de Pinchat 22, 1211 Geneva, Switzerland}
\author{N. Sangouard}
\affiliation{Department of Physics, University of Basel, Klingelbergstrasse 82, 4056 Basel, Switzerland}
\affiliation{Universit\'e Paris-Saclay, CEA, CNRS, Institut de physique th\'eorique, 91191, Gif-sur-Yvette, France}

\begin{abstract}

Device-independent quantum key distribution provides security even when the 
equipment used to communicate over the quantum channel is largely uncharacterized. 
An experimental demonstration of device-independent quantum key distribution is however challenging. A central obstacle in photonic implementations is that the global detection efficiency, i.e., the probability that the signals sent over the quantum channel are successfully received, must be above a certain threshold. We here propose a method to significantly relax this threshold, while maintaining provable device-independent security. This is achieved with a protocol that adds artificial noise, which cannot be known or controlled by an adversary, to the initial measurement data (the raw key). Focusing on a realistic photonic setup using a source based on spontaneous parametric down conversion, we give explicit bounds on the minimal required global detection efficiency.
\end{abstract}
\maketitle

\paragraph{Introduction.---} \emph{Quantum key distribution (QKD)} allows two parties,  Alice and Bob,  who are connected by a quantum channel, to generate a secret key~\cite{Bennett84,Ekert91}. QKD has been demonstrated in countless experiments, see Refs.~\cite{Gisin02, Scarani09, Lo14} for reviews. The security of QKD, i.e., the claim that an adversary, Eve, who may fully control the quantum channel, gains virtually no information about the key, usually relies on the assumptions that (i)~quantum theory is correct; (ii)~Alice and Bob can exchange classical messages authentically, i.e., an adversary cannot alter them; (iii)~Alice and Bob's devices are trusted, i.e., they carry out precisely the operations foreseen by the protocol~\cite{Renner08}. 


The last assumption is hard to meet in practice. This leads to vulnerabilities, as demonstrated by hacking experiments~\cite{Zhao08,Lydersen10,Weier11,Garcia19}. The aim of \emph{device-independent} QKD is to overcome this problem --- it provides security  even when assumption~(iii) is not satisfied. One usually distinguishes between different levels of device-independence, depending on what the assumption is replaced with. In \emph{measurement-device-independent} QKD, one considers prepare-and-measure protocols and drops assumption~(iii) for the measurement device but not for the preparation device~\cite{Braunstein12,Lo12,Rubenok13,Ferreira13,Liu13,Tang14,Comandar16}. One therefore still needs to trust the latter to generate precisely calibrated quantum states. This requirement is dropped in \emph{fully device-independent} QKD, which is the topic of this work.

Fully device-independent QKD protocols are entanglement-based~\cite{Ekert91}. A completely untrusted source distributes entangled signals to Alice and Bob who measure them. Alice and Bob's measurement device is each modelled as a \emph{black box}, which takes as input a choice of measurement basis and outputs the corresponding measurement outcome. Crucially, however, one does not need to assume that the black boxes actually carry out the intended measurement. Instead of assumption (iii), it is then sufficient to assume that any information that, according to the protocol, must be processed locally by Alice and Bob remains in their respective labs, that the inputs to the devices can be chosen independently from all other devices, and that their outputs are post-processed with trusted computers.

The generation of a secure key requires sufficiently many signal pairs that are sufficiently strongly entangled. The relevant measure of entanglement is the amount by which the measurement statistics violates a Bell inequality, such as the Clauser-Horne-Shimony-Holt (CHSH) inequality~\cite{CHSH69}. Intuitively, a significant violation of this inequality guarantees that Alice and Bob's state is close to a two-qubit maximally entangled state which cannot be shared with a third party~\cite{Kaniewski16, Valcarce2020}. This, in turn, guarantees that Eve's information about Alice and Bob's measurement outcomes is bounded. A mere violation of the CHSH inequality is however not sufficient for secure key distribution to be possible. Rather, the amount of violation must be above a certain threshold, which depends on the \emph{global detection efficiency}, i.e., the probability that Alice's measurement (the same applies to Bob's measurement) on a given pair of entangled signals is successful \footnote{Consider for example a two-qubit maximally entangled state measured with detection systems having a detection efficiency $\eta_A$ for Alice and $\eta_B$ for Bob. Further consider for simplicity the case where $\eta_A=\eta_B=\eta.$ If both measurements succeed, which happens with probability $\eta^2$, the CHSH value can be as high as $2\sqrt{2}$. If one of them fails, which happens with probability $2\eta(1-\eta),$ the results are completely uncorrelated and the CHSH score is $0$. When none of the measurement succeeds, which happens with probability $(1-\eta)^2,$ the results are classically correlated and the CHSH score is at most equal to $2.$ A violation of the CHSH inequality is only possible in this scenario if $\eta \geq \frac{2}{\sqrt{2}+1} \approx 82.8\%.$ Note that the threshold can be lowered to $\approx 67\%$ is obtained with non-maximally entangled two-qubit state~\cite{Brunner14}.}. For secure key distribution to be possible, it must be above a threshold, which depends on the protocol as well as on the security proofs~\cite{Pironio09,Vazirani14,ArnonFriedman18}. The best (minimum) threshold known so far is beyond the range achievable by state-of-the-art experiments. 

The main contribution of this Letter is to propose a protocol for fully device-independent QKD that has a significantly lower threshold for the global detection efficiency, and prove its security against general attacks. The protocol includes a step where artificial noise is added to the measurement outcomes --- a method that has been known to lead to improvements in conventional (device-dependent) quantum cryptography~\cite{Renner05a, Renner05b, Renes07}. 
The additional noise damages both the correlation between Alice and Bob and the correlation to Eve. But since the possibility to generate a key depends on the difference between the strengths of these correlations, the net effect can still be positive. As our calculations show, this is indeed the case.

For concreteness, we consider an implementation on an optics platform as shown in Fig.~\ref{Fig1}, where entangled photons are generated by spontaneous parametric down conversion (SPDC) and measured with photon counting techniques. Such a setup is appealing as it enables high repetition rates, so that a key can be generated after a reasonable running time. We will however need to take into account that the statistics of this photon source intrinsically limits the maximum CHSH violation~\cite{CapraraVivoli15}. In this context, we prove that noisy pre-processing provides a significant reduction of the requirement on the detection efficiency.\\

\begin{figure}
\begin{center}
\includegraphics[scale=0.4]{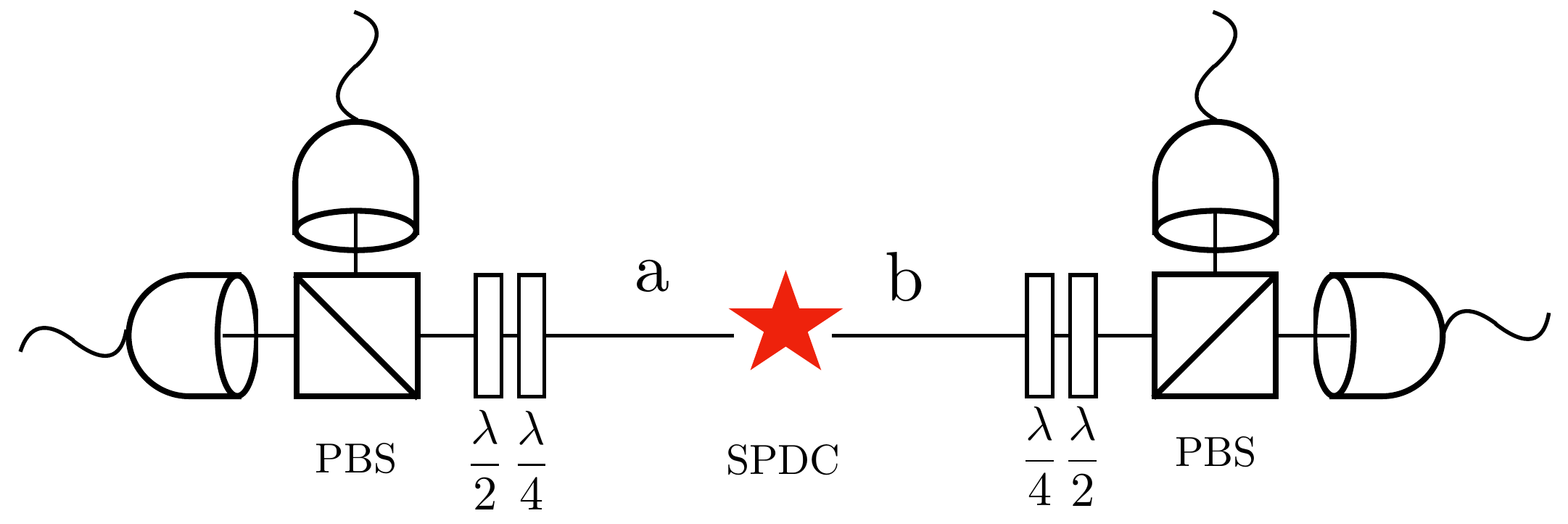}
\end{center}
\caption{Experimental platform envisioned for a device-independent implementation of quantum key distribution. A source (red star) based on spontaneous parametric down conversion (SPDC) is used to create photon pairs entangled in polarization. Alice receives mode $a$ while Bob receives mode $b$, and they perform measurements using a polarising beamsplitter (PBS) and two detectors. A set of wave-plates ($\frac{\lambda}{4}, \frac{\lambda}{2}$) allow them to choose the measurement setting.}
\label{Fig1}
\end{figure}

\paragraph{Protocol.---} For our device-independent QKD protocol, we consider a source that repeatedly distributes a pair of entangled signals (encoded in photonic modes) to Alice and Bob. Alice measures her part of each pair using a measurement $A_x$ with setting $x\in\{0,1,2\}$ chosen at random.
Similarly, Bob measures his part with a measurement $B_y$ where $y\in\{1,2\}$ is a random setting. 
While the measurement outcomes may in general be non-binary, we suppose that for $x, y \in \{1,2\}$ they are processed locally and turned into binary values $\rv{A}_x, \rv{B}_y \in \{-1,+1\}$ for Alice and Bob, respectively. In a \emph{parameter estimation step}, Alice and Bob communicate classically to estimate, using a sample of their results, the \emph{CHSH score} 
\begin{equation}
S=\langle A_1 B_1\rangle + \langle A_1 B_2\rangle + \langle A_2 B_1\rangle - \langle A_2 B_2\rangle
\end{equation}
 where 
 \begin{equation}
 \langle A_x B_y\rangle = p(\rv{A}_x=\rv{B}_y | x, y) - p(\rv{A}_x \neq \rv{B}_y | x, y) 
 \end{equation}
 quantifies the correlation between the outcomes for measurement choices $x$ and $y$, respectively. The measurement setting $x=0$ does not enter the score, but instead is chosen to generate an outcome $\rv{A}_0$ that minimises the uncertainty about $\rv{B}_1$ (quantified in terms of $H(\chat{\rv{B}}_1|\rv{A}_0)$, see Eq.~\eqref{keyrate} below). Bob then forms the \emph{raw key} from the outcomes $\rv{B}_1$ of the pairs that Alice measured with respect to $x=0$.

The next step of our protocol is known as \emph{noisy pre-processing}~\cite{Renner05a, Renner05b, Renes07}. Bob is instructed to generate new raw key bits $\hat{B}_1$ by flipping each of the bits $B_1$ of his initial raw key independently with probability~$p$. The protocol then proceeds with an error correction step that allows Alice to infer Bob's new (noisy) raw key. In a final privacy amplification step, Alice and Bob apply a hash function to this raw key to obtain the final secret key. \\


\paragraph{Key generation rate from CHSH score.---} Suppose that the source has emitted $n$ pairs of entangled signals, and denote their joint quantum state by $\Psi_{ABE}$. In the device-independent scenario, nothing is known about this state, nor the state space. Using the entropy accumulation theorem, one can show however that the entropic quantities that are relevant for measuring Eve's information are (up to terms that are sublinear in $n$) basically the same as the case where the devices are memoryless and behave identically and independently in each communication round of the protocol~\cite{ArnonFriedman18}. In particular, we may assume that $\Psi_{ABE} = \psi_{ABE}^{\otimes n}$ where $\psi_{ABE} \in \mathcal{H}_A \otimes \mathcal{H}_B \otimes \mathcal{H}_E$ is a tri-partite state compatible with the CHSH score. We note however that the dimension $d_{A/B}= \mathcal{H}_{A/B}$ of Alice and Bob's state space may still be arbitrary.  

In the asymptotic limit of large $n$, the key generation rate when optimal one-way error correction and privacy amplification is used is given by~\cite{DevetakWinter}
\begin{align} 
r_{} &=  \entVN(\chat{\rv{B}}_1|E) - H(\chat{\rv{B}}_1|{\rv{A}}_0), \label{keyrate}
\end{align}
where $\entVN$ is the von Neumann entropy (which includes the Shannon entropy as a special case). The first term, which quantifies Eve's uncertainty, can be expanded as
\begin{equation}\begin{split}
\entVN(\chat{\rv{B}}_1|E)
&=H(\chat{\rv{B}}_1) - \left(\entVN(\rho_E) - \sum_{b} p_{b} \entVN(\hat{\rho}_{E|b})\right),
\end{split}\label{eq:cont ent}
\end{equation}
where $\rho_E = \text{tr}_{AB}|\psi_{ABE} \rangle\langle \psi_{ABE}|$ is Eve's reduced state and $\hat{\rho}_{E|b}$ is Eve's state conditioned on the event that Bob's noisy key bit $\chat{\rv{B}}_1$ equals $b$, which occurs with probability $p_{b}$. By showing the equivalence of the protocol to one that includes a symmetrisation step (see Supplemental Material, Appendix A), where Bob flips the outcomes of each raw key bit $\rv{B}_1$ depending on a public random string, one can assume without loss of generality that $p_{b}=\frac{1}{2}$.

To bound $H(\rho_E)$ and $\entVN(\hat \rho_{E|b}),$ we use the approach presented in~\cite{Pironio09}, which we briefly outline here (see Supplemental Material, Appendix B). There, one first uses Jordan's lemma~\cite{Supic19} to choose a basis for $\mathcal{H}_A$ in which the observables $A_x$ for $x \in \{1,2\}$ are simultaneously block diagonal, with blocks of dimension at most 2 that we label $\lambda$. 
This can be expressed via a factorization of Alice's Hilbert space in the form $\mathcal{H}_A=\mathcal{H}_{A'}\otimes \mathcal{H}_{A''}$ so that her measurement operators are given by
\begin{align}
A_x = \sum_{\lambda} A_x^\lambda \otimes \pure{\lambda}
\end{align}
where $A_x^\lambda$ are qubit Pauli measurements and $\{\ket{\lambda}\}$ form an orthonormal basis for $A''$. Similarly, for Bob we can write $B_y = \sum_{\mu} B_y^\mu \otimes \pure{\mu}$. Since there are two possible measurements for Alice and Bob, the Pauli measurements $A_x^\lambda$ and $B_y^\mu $ can be taken to define the $X$-$Z$ planes of the Bloch spheres in each block. 

By introducing an ancilla $\mathcal{H}_{R''}$ that is a copy of $\mathcal{H}_{A''} \otimes \mathcal{H}_{B''}$, one can bound (see Supplemental Material, Appendix B) Eve's uncertainty on $\chat{\rv{B}}_1$ by instead bounding $\entVN(\chat{\rv{B}}_1|E R'')$ for states of the form 
\begin{align}
\sum_{\lambda\mu} p_{\lambda\mu} 
\pure{\lambda\mu\lambda\mu}_{A''B''R''}
\otimes 
\pure{\Psi^{\lambda\mu}}_{A'B'E}
\label{betterstate}
\end{align}
with  
\begin{equation}
\ket{\Psi^{\lambda\mu}}_{A'B'E} = \sum_{i=1}^4 \sqrt{L_i} \ket{\Phi_i}_{A'B'}\ket{i}_E
\end{equation}
where $\ket{\Phi_i}$, for $i=1,\dots,4$, form a Bell basis (with respect to the $X$-$Z$ plane defined by the measurements) and the weights $L_i$ can be taken to satisfy $L_1\geq L_2$, $L_3\geq L_4$\footnote{Each value of $(\lambda,\mu)$ potentially corresponds to a different Bell basis, different weights $L_i$ and orthonormal states $\ket{i}_E.$ This is not explicitly mentioned in the choice of notations to keep the latter as simple as possible.}. 
The block-diagonal structure of the state implies $\entVN(\chat{\rv{B}}_1|ER'')= \sum_{\lambda \mu} p_{\lambda\mu}  \entVN(\chat{\rv{B}}_1|E)_{\Psi^{\lambda\mu}}$, where $\entVN(\chat{\rv{B}}_1|E)_{\Psi^{\lambda\mu}}$ is the value produced by the state $\ket{\Psi^{\lambda\mu}}_{A'B'E}$ after measurement and noisy pre-processing.

We now evaluate Eq.~\eqref{eq:cont ent} for any single block defined by $\lambda$ and $\mu$. Let
$S^{\lambda\mu}$ be the CHSH score of the block. The term $H(\rho_E^{\lambda \mu})$ for this block is simply the entropy of the weight distribution $\mathbf{L} = (L_1, \ldots, L_4)$. The term $\entVN(\hat{\rho}_{E|b}^{\lambda \mu})$ is more complicated to compute as the conditional states $\hat{\rho}_{E|b}^{\lambda \mu}$ 
also depend on the angle $\phi$ of Bob's measurement $B_1(\phi)= \cos(\phi)\sigma_z + \sin(\phi)\sigma_x$, and the amount of noise added.  The effect of the latter is to mix the conditional states corresponding to the two possible outcomes $\pm 1$
\begin{equation}
    \hat{\rho}_{E|{b=\pm 1}}^{\lambda \mu} = (1-p) \rho_{E|{b=\pm 1}}^{\lambda \mu} + p\, \rho_{E|{b=\mp 1}}^{\lambda \mu},
\end{equation}
where $\rho_{E|{b=\pm 1}}^{\lambda \mu} = \textrm{tr}_{A'B'} \mathds{1}_{A'E} \otimes\left(\mathds{1} \pm B_1(\phi)\right) \projec{\Psi^{\lambda\mu}}_{A'B'E}$. In contrast to the conditional states $\rho_{E|{b=\pm 1}}^{\lambda \mu}$ without noise, the eigenvalues of the states $\hat{\rho}_{E|{b=\pm 1}}^{\lambda \mu}$ do not have a simple expression, hence making the computation of $\entVN(\hat{\rho}_{E|{b=\pm 1}}^{\lambda \mu})$ more difficult than in the absence of noisy pre-processing~\cite{Pironio09}. 

However, we can show that Eve's uncertainty $\entVN(\chat{\rv{B}}_1|E)_{\Psi^{\lambda\mu}}$ is an increasing function of the angle $\phi \in[0,\pi/2]$ of Bob's measurement, see Appendix C section 1. This allows us to conclude that Eve's information is maximized for $\phi=0$. In a next step, we show that $H(\chat{\rv B}_1|E)_{\Psi^{\lambda\mu}}$ is minimized for
$L_2=L_4=0$, $L_1= \frac{1}{4} \left(2+\sqrt{(S^{\lambda\mu})^2-4}\right)$ and $L_3=1-L_1$, see Supplemental Material, Appendix C section 2. This shows the state and measurement minimizing Eve's ignorance is independent of $p$, and hence identical to the one for the case where Bob does not introduce any artificial noise. The resulting bound on Eve's uncertainty is of the form $\entVN(\chat{\rv{B}}_1|E)_{\Psi^{\lambda\mu}} \geq 1- I_p(S^{\lambda\mu})$ with
\begin{equation}
\label{eq:noisybound}
\begin{split}
I_p(S)= &h\left(\frac{1+\sqrt{(S/2)^2-1}}{2} \right) \\
- &h \left( \frac{1+  \sqrt{1- p(1-p) (8-S^2)}}{2}  \right),
\end{split}
\end{equation}
where $h$ denotes the binary entropy, see Supplemental Material, Appendix C section 2. Combining the convexity of the function $1-I_p(S)$ with the relation $\entVN(\chat{\rv{B}}_1|ER'')= \sum_{\lambda \mu} p_{\lambda\mu}  \entVN(\chat{\rv{B}}_1|E)_{\Psi^{\lambda\mu}}$ for the state~\eqref{betterstate}, we deduce the overall bound
\begin{equation}
    \entVN(\chat{\rv{B}}_1|ER'') \geq 1- I_p(S).
\end{equation}
In particular, it follows from Eq.~\eqref{keyrate} that the secret key rate with optimal error correction satisfies
\begin{equation} \label{eq_fullkeyrate}
     r \geq 1-I_p(S) - H(\chat{\rv{B}}_1|\rv{A}_0).
\end{equation} \\

\paragraph{Optics implementation.---} It has been recently shown by multiple experiments that it is possible to violate a Bell inequality (without the need of a fair sampling assumption) with an optical implementation \cite{Giustina13, Christensen13, Shalm15, Giustina15, Shen18, Liu18}, similar to the one shown in Fig.~\ref{Fig1}. An SPDC source produces a bipartite state according to the Hamiltonian $H= i\sum_{k=1}^{N} (\chi a_k^\dagger b_{k,\perp}^\dagger - \bar{\chi} a_{k,\perp}^\dagger b_k^\dagger - h.c.  )$, where the bosonic operators $a_k, a_{k, \perp} (b_k, b_{k, \perp})$ refer to one of the $N$ modes received by Alice (Bob) with two possible polarizations labelled with and without the subscript $\perp$.
$\chi$ and $\bar{\chi}$ are related to the non-linear susceptibility and can be controlled by appropriately tuning the pump power and its polarization. The resulting state is given by 
\begin{equation}
\nonumber
|\psi\rangle = (1-T_g^2)^{N/2} (1-T_{\bar{g}}^2)^{N/2} \Pi_{k=1}^N e^{T_g a_k^\dagger b_{k,\perp}^\dagger - T_{\bar{g}} a_{k,\perp}^\dagger b_k^\dagger} |\underbar{0}\rangle
\end{equation}
where $T_g=\tanh g$ and $T_{\bar g}=\tanh \bar g$ with $g= \chi t $ and $\bar{g}= \bar{\chi} t$ squeezing parameters related to $a_k^\dagger b_{k,\perp}^\dagger$ and $a_{k,\perp}^\dagger b_k^\dagger$, respectively. Measurements are done with a polarization beamsplitter and two non-photon number resolving detectors. A set of wave-plates is finally used to choose the measurement setting locally. 

In such an optical implementation, each of the measurements applied by Alice and Bob has four possible outcomes: no-click on both detectors, one click in either one of the two detectors, and two clicks. In the case of $A_1$, $A_2$, $B_1$, and $B_2$, these results are binned to form the binary values required for the computation of $S$. For example, Alice may set $\rv{A}_1=1$ if she observed one click in one specific detector and none in the other, and $\rv{A}_1=-1$ for the other three variants. However, following the proposal presented in Ref.~\cite{Ma12}, no such binning is carried out for~$\rv{A}_0$, because this value is used only for error correction.

A central performance parameter for Bell-type experiments is the detection efficiency $\eta_A$ ($\eta_B$) , which is defined as the overall probability for a photon emitted from the source to be detected 
at Alice's (Bob's) location~\cite{Brunner14}. We assume that Alice's and Bob's detection efficiencies are equal; $\eta_A=\eta_B=\eta$. We now use Eq.~\eqref{eq_fullkeyrate} to determine the threshold for $\eta$ above which the secret key rate is positive. For this, we optimise the number of modes $N$ which affects the photon statistics, the squeezing parameters $g$ and $\bar g$ which change the average number of photons in each mode, the measurement settings, and the noise parameter~$p$, see Supplemental Material, Appendix D.\\

\paragraph{Results and comparison to other protocols.---} Our main finding, shown as a blue solid  line in Fig.~\ref{Fig2}, is that a positive key rate can be obtained with an SPDC source as soon as the detection efficiency is larger than $83.2\%$.

\begin{figure}
\begin{center}
\includegraphics[width=0.52\textwidth]{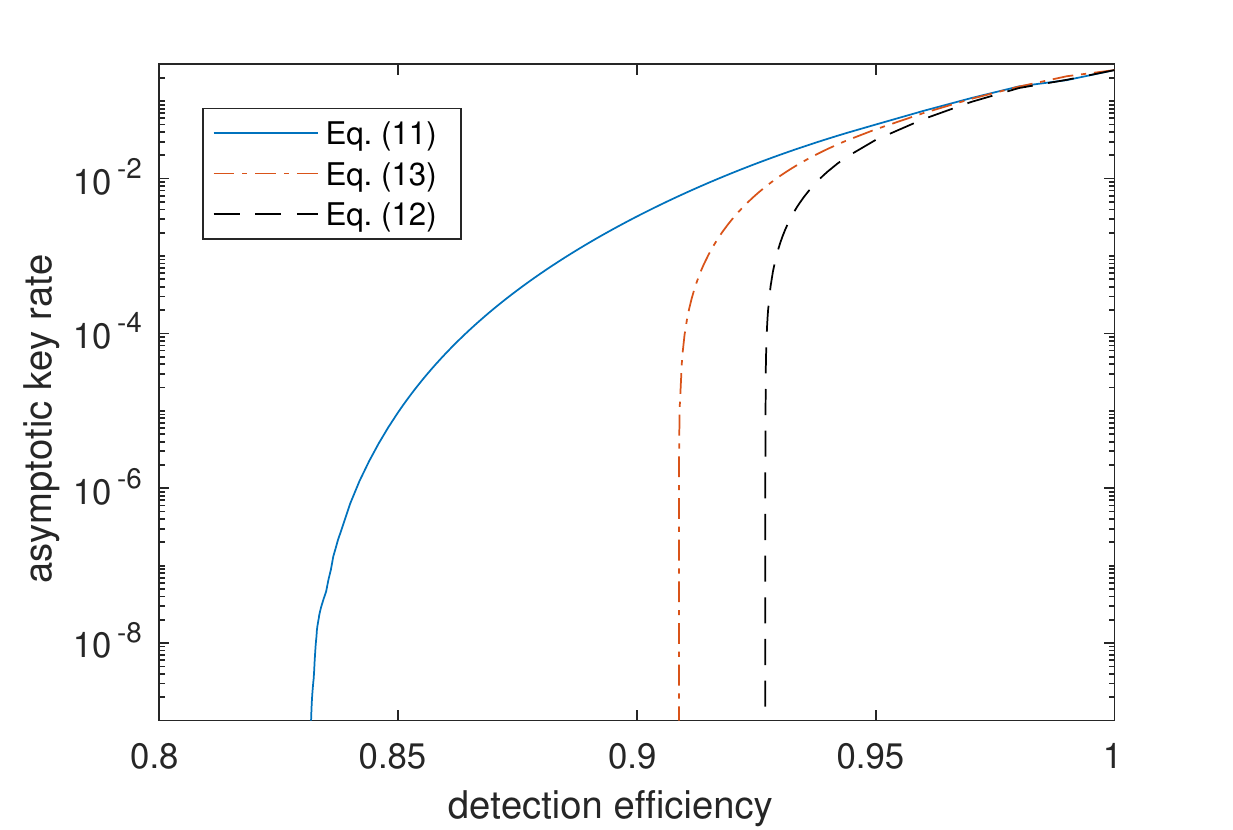}
\end{center}
\caption{Key rates vs system efficiency for the protocols based on a SPDC type source combined with photon counting techniques described in the main text. The key rate obtained from Ref.~\cite{Pironio09}, where the error correction cost is computed from the quantum bit error rate, shows positive key rates at $\eta \geq 92.7 \%$ (dashed black curve). The use of a refinement proposed in Ref.~\cite{Ma12} where the error correction cost is estimated directly from the measurement results, improves it to $ 90.9 \%$ (dashed-dotted red curve). Further employing noisy pre-processing brings the critical efficiency to $83.2\%$ (solid blue curve).}
\label{Fig2}
\end{figure}

For comparison, we consider the protocol studied in~\cite{Pironio09}. The difference to our protocol are two-fold. Firstly, no artificial noise is added, i.e., $p=0$. Secondly, $\rv{A}_0$ is taken to be a binary value. In this case, the bound on the rate reduces to 
\begin{equation}
r \geq 1-h\left(\frac{1+\sqrt{(S/2)^2-1}}{2} \right) -h(Q)
\end{equation}
where  $Q=p({\rv{A}_0 \neq \rv{B}_1})$ is the quantum bit error rate. The bound corresponds to the black dashed line in Fig.~\ref{Fig2}. The maximum key rate is below $1$ for unit detection efficiency due to the photon statistics. The  key rate decreases as the efficiency decreases and a minimum global detection efficiency of $\eta=92.7\%$ is needed to have a positive key rate.

If we consider instead  a protocol like the one in~\cite{Pironio09}, but with error correction that uses the four-valued outcome $\rv{A}_0$ (the four possible outcomes correspond physically to i) no click at all, ii) and iii) one click exactly in one of the two detectors and iv) a twofold coincidence) instead of its binarisation, one already obtains an improvement~\cite{Ma12}. The rate is now given by
\begin{equation}
r \geq 1-h\left(\frac{1+\sqrt{(S/2)^2-1}}{2} \right) -H(\rv{B}_1|\rv{A}_0),
\end{equation}
and represented as the red dashed-dotted line in Fig.~\ref{Fig2}. (See Supplemental Material, Appendix~A~for details regarding the computation of $H(\rv{B}_1|\rv{A}_0)$.)  The maximum rate is essentially unchanged but the behaviour as a function of the detection efficiency is different. In particular, the threshold for the global detection efficiency is $90.9\%$. Furthermore, when adding noisy pre-processing of the raw key into the protocol of Ref.~\cite{Pironio09} with error correction using the four valued outcome $\rv{A}_0$~\cite{Ma12}, the requirement on the detection efficiency goes down to 83.2\%.\\

\paragraph{Discussions and conclusion.---}
SPDC sources allowed several groups worldwide to test Bell inequalities without detection loopholes~\cite{Giustina13, Christensen13, Shalm15, Giustina15, Shen18, Liu18} and to conclude the existence of non-local causal correlations with a high statistical confidence. The observed violation of Bell inequalities in these experiments was however very limited due the intrinsic photon statistics and non-unit detection efficiency. One may thus wonder whether an extension of existing detection-loophole-free Bell tests to device-independent QKD is even possible. We computed the requirement on the detection efficiency using known security proofs. We then significantly reduced this requirement by deriving a new security proof for the case where noisy pre-processing is applied before further classical processing.

Noisy pre-processing of the raw key  was known to increase the resistance to noise of conventional QKD. For BB84~\cite{Bennett84} for example, the critical quantum bit error rate goes from 11\% without noisy pre-processing to 12.4\% with it -- a 13\% relative improvement~\cite{Renner05a, Renner05b, Renes07}. In the case of DIQKD, noisy pre-processing lowers the efficiency from 90.9\% to 83.2\% a relative tolerance improvement of 78\%. This might have dramatic consequences for the prospect of realizing DIQKD experimentally. A detailed feasibility study using a SPDC source is on-going, and first estimates accounting for noise and finite-statistics effects are promising for demonstrating that a key can be distributed over short distances with fully device-independent security guarantees. 

Note that our results are also relevant in a relaxed device-independent scenario where one assumes that a part of the local noise of Alice's and Bob's device is well characterized and cannot be controlled by Eve. Given the amount of such ``trusted'' noise for both parties, $p_A$ and $p_B,$ we can  reconstruct the CHSH value that would be observed without this noise. The latter can then be used to bound Eve's information using Ineq. \eqref{eq:noisybound}, where $p$ can be taken to be equal to $p_B$ (or, by adding extra artificial noise, a larger value). By strengthening  the assumption (that some noise is not under control of Eve) one can thus further improve the efficiency, i.e.\ a positive key rate can be obtained with reduced requirements on the detection efficiency.\\

\paragraph{Acknowledgments--}
We thank Rotem Arnon-Friedman, Enky Oudot and Julian Zivy for enlightening discussions, Valerio Scarani and Charles Ci Wen Lim for comments on the manuscript. We acknowledge funding by the Swiss National Science Foundation (SNSF), through the Grants PP00P2-179109, 200021E-176284 (TriQuI) as well as via the National Center for Competence in Research for Quantum Science and Technology (QSIT). 

\newpage
\onecolumngrid
\appendix

\section{Understanding the symmetrisation step}
\label{app:symm}
We first give a detailed description of the symmetrisation step --- in particular, when the outcome of the measurement $A_0$ is not binary-valued, some care needs to be taken in interpreting this process. For this section, it is convenient to interpret all two-outcome measurements as giving outcomes in $\mathbb{Z}_2$ rather than $\{-1,+1\}$, in contrast to the rest of this work. Addition modulo $2$ will be denoted as $\oplus$. 

To start with, consider the situation where noisy pre-processing is not applied. In that case, we should understand symmetrisation to be carried out as follows: Bob generates a uniform random bit $\rv{T}$ and transforms his measurement output to $\rv{B}'_y = \rv{B}_y \oplus \rv{T}$, then sends $\rv{T}$ to Alice over a public channel. If Alice performed measurement $A_1$ or $A_2$, she transforms her measurement output to $\rv{A}'_x = \rv{A}_x \oplus \rv{T}$ as well (hence the CHSH value is invariant under this procedure); however, if she performed measurement $A_0$, she simply stores the value of the bit $\rv{T}$. In that case, the cost of one-way error correction from Bob to Alice is supposed to be quantified by $H(\rv{B}'_1|\rv{A}_0 \rv{T})$, rather than $H(\rv{B}_1|\rv{A}_0)$. For the purpose of computing this value, however, we note that~\cite{Scarani08} (letting $\rv{T}'$ be a copy of $\rv{T}$)
\begin{align}
H(\rv{B}_1|\rv{A}_0) 
&= H(\rv{B}_1 \rv{A}_0) - H(\rv{A}_0) + H(\rv{T}'\rv{T}) - H(\rv{T}) \\
&= H(\rv{B}_1 \rv{T}' |\rv{A}_0 \rv{T}) \quad \text{since $\rv{T}', \rv{T}$ are independent of $\rv{B}_1, \rv{A}_0$}\\
&= H(\rv{B}'_1 \rv{T}' |\rv{A}_0 \rv{T}) \label{eq:2outof3} \\
&= H(\rv{B}'_1 \rv{A}_0 \rv{T}) - H(\rv{A}_0 \rv{T}) \quad \text{since } H(\rv{B}'_1 \rv{T}' \rv{A}_0 \rv{T}) = H(\rv{B}'_1 \rv{A}_0 \rv{T}) \\
&= H(\rv{B}'_1 |\rv{A}_0 \rv{T}), \label{eq:entsymm}
\end{align}
where line~\eqref{eq:2outof3} holds because knowing any two out of $(\rv{B}_1,\rv{B}'_1,\rv{T}')$ completely determines the third value. Hence to know the value of $H(\rv{B}'_1|\rv{A}_0 \rv{T})$, we can simply compute $H(\rv{B}_1|\rv{A}_0)$ instead (which can be done directly from the known distribution $\pr_{\rv{B}_1 \rv{A}_0}$).

We now consider how noisy pre-processing interacts with the symmetrisation step. Note that noisy pre-processing can be modelled as replacing the symmetrised bit $\rv{B}'_1$ by $\chat{\rv{B}}'_1 = \rv{B}'_1 \oplus \chat{\rv{T}}$, where $\chat{\rv{T}}$ is a biased random bit with $\pr(\rv{T}=1)=p$. Then we have $\chat{\rv{B}}'_1 = \rv{B}_1 \oplus \chat{\rv{T}} \oplus \rv{T}'$, so we see that the operations of symmetrisation and noisy pre-processing commute, and we can perform an analysis based on either ordering. For deriving bounds on $H(\chat{\rv{B}}'_1|E\rv{T})$ in the subsequent sections, it will be convenient to assume symmetrisation is applied before noisy pre-processing. However, when computing the error-correction term $H(\chat{\rv{B}}'_1|\rv{A}_0 \rv{T})$, it is more convenient to assume noisy pre-processing is applied before symmetrisation, since we can then apply the same argument as in the preceding paragraph to show this is equal to $H(\chat{\rv{B}}_1|\rv{A}_0)$, so we can simply compute the latter instead (from the known distribution $\pr_{{\chat{\rv{B}}_1} \rv{A}_0}$).

By exploiting the fact that these operations commute, we can also argue that in fact the symmetrisation step is purely for convenience in the proofs and unnecessary to implement in practice, following the approach in~\cite{Scarani08}. Specifically, the bounds we derive in the subsequent sections are bounds on $H(\chat{\rv{B}}'_1|E\rv{T})$. However, if noisy pre-processing is performed first and the symmetrisation step is omitted, 
then the same calculation as that leading to~\eqref{eq:entsymm} (with $\chat{\rv{B}}_1, \chat{\rv{B}}'_1, E$ in place of $\rv{B}_1, \rv{B}'_1, \rv{A}_0$ respectively) shows that $H(\chat{\rv{B}}_1|E) = H(\chat{\rv{B}}'_1 |E \rv{T}).$
Hence our bounds are also valid for $H(\chat{\rv{B}}_1|E)$, i.e.~the value without implementing the symmetrisation step.


\section{Bell-diagonal reduction}
\label{sec:BelldiagReduction}

The aim of this section is, qualitatively speaking, to prove that we can restrict our analysis to Eve performing a ``classical mixture'' of strategies parametrised by indices $(\lambda,\mu)$, where for each value of $(\lambda,\mu)$, the Alice-Bob-Eve state is of the form
\begin{equation}
\ket{\Psi^{\lambda\mu}}_{A'B'E} = \sum_{i=1}^4 \sqrt{L_i} \ket{\Phi_i}_{A'B'}\ket{i}_E,
\end{equation}
where $\ket{\Phi_1} = \ket{\Phi^+}$, $\ket{\Phi_2} = \ket{\Psi^-}$, $\ket{\Phi_3} = \ket{\Phi^-}$ and $\ket{\Phi_4} = \ket{\Psi^+}$ are the four Bell states, and $\bf L$ is a vector of probabilities that can be taken to satisfy $L_1\geq L_2$ and $L_3\geq L_4$. Additionally, for each $(\lambda,\mu)$ the corresponding measurements can be assumed to be qubit Pauli measurements in the $X$-$Z$ plane of the Bloch sphere.
The proof is very similar to that in Ref.~\cite{Pironio09}, though we present the argument in a slightly different order to make it clear that it still applies even with noisy pre-processing. The calculations performed in this section are all straightforward, but are given in some detail for completeness.

Consider any state and measurements that could be used in the protocol, giving rise to some probability distribution $\pr(a,b|x,y)$. (We will generally use small letters to denote values taken by random variables, and capital letters to denote the random variables (equivalently, classical registers) themselves.)
A symmetrisation step is then implemented on all measurement outcomes via a publicly communicated bit $\rv{T}$, yielding symmetrised probabilities $\pr'(a,b|x,y) = (\pr(a,b|x,y) + \pr(\flip{a},\flip{b}|x,y))/2$, 
where we use the abbreviated notation 
$\flip{c}\defvar -c$ for $c \in \{-1,+1\}$. 
We will use the notation $\rv{B}_1$ for the outcome of Bob's first measurement and $\rv{B}_1'$ for his bit after symmetrization (similarly for Alice). After the symmetrisation step, noisy pre-processing is applied, giving rise to some final conditional entropy $H(\chat{\rv{B}}'_1|E\rv{T})$ (we will use $\chat{\rv{B}}'_1$ to denote the result of the noisy pre-processing applied to $\rv{B}'_1$). Our goal is to find a lower bound on $H(\chat{\rv{B}}'_1|E\rv{T})$ that would hold for all states and measurements producing the outcome probabilities $\pr'(a,b|x,y)$ after symmetrisation.
(In principle, one could instead use the unsymmetrised probabilities $\pr(a,b|x,y)$ as the constraints, but this will not be what we consider in this work.)

As a first step, we observe that the combined process of measuring the state and symmetrising the outcomes can be viewed as Alice and Bob's devices making joint measurements across their quantum systems and their local copies of $\rv{T}$. (Explicitly, if the state was originally $\rho^0_{ABE}$ and the measurements $\proj^0_{a|x}, \proj^0_{b|y}$, we could consider the combined process to be applying the measurements $\proj_{a|x} = \proj^0_{a|x} \otimes \pure{0} + \proj^0_{\flip{a}|x} \otimes \pure{1} , \proj_{b|y} = \proj^0_{b|y} \otimes \pure{0} + \proj^0_{\flip{b}|y} \otimes \pure{1}$ to the state $\rho^0_{ABE} \otimes 
(\pure{000}+\pure{111})/2$.) Since we are interested in bounding the conditional entropy of $\chat{\rv{B}}'_1$ produced from the post-symmetrisation outcomes, it would suffice to find a bound on $H(\chat{\rv{B}}'_1|E)$ that would hold for \emph{any} state $\rho_{ABE}$ and measurements producing the outcome distribution $\pr'(a,b|x,y)$, since in particular this would include the state and measurements that describe the combined measurement-and-symmetrisation process. Therefore, we focus on bounding $H(\chat{\rv{B}}'_1|E)$ for states and measurements directly producing $\pr'(a,b|x,y)$. Also, we can restrict ourselves to pure states without loss of generality, since if the state is mixed we can simply extend $E$ to include its purification without reducing Eve's power, due to the data-processing inequality. The measurements can also be assumed to be projective by performing an appropriate Naimark dilation that preserves both the outcome probabilities and Eve's conditional states; see for instance~\cite{Tan19} for details.

Having reduced to this scenario, now consider any pure state $\ket{\psi}_{ABE}$ and projective measurements $\proj_{a|x},\proj_{b|y}$ producing the outcome distribution $\pr'(a,b|x,y)$.
Since Alice only has two possible measurements and each has two outcomes, by Jordan's lemma~\cite{Pironio09, Scarani12} all her projectors can be written as a direct sum of qubit projectors, which we express by factorising her Hilbert space as $A=A'\otimes A''$ and writing: 
\begin{align}
\proj_{a|x} = \sum_{\lambda} \proj^{\lambda}_{a|x} \otimes \pure{\lambda},
\end{align}
where all $\proj^{\lambda}_{a|x}$ are rank-1 projectors
on a qubit system $A'$, and $\{\ket{\lambda}\}$ forms an orthonormal basis for $A''$. Similarly, for Bob we can write $\proj_{b|y} = \sum_{\mu} \proj^{\mu}_{b|y} \otimes \pure{\mu}$. 
Since $\ket{\lambda},\ket{\mu}$ form bases for $A'',B''$, we can always write
\begin{align}
\ket{\psi}_{ABE} = \sum_{\lambda\mu} \sqrt{p_{\lambda\mu}} \ket{\lambda\mu}_{A''B''} \otimes \ket{\psi^{\lambda\mu}}_{A'B'E},
\end{align}
for some probability distribution ${p_{\lambda\mu}}$. 
If the measurements $(x,y)$ are performed and the outcomes $(a,b)$ are obtained, Eve's (subnormalised) conditional states are easily computed to be\footnote{More abstractly, we could have obtained this result by noting that $\proj_{a|x} = \sum_{\lambda} (\id_{A'} \otimes \pure{\lambda}) \proj_{a|x} (\id_{A'} \otimes \pure{\lambda})$ and analogously for $\proj_{b|y}$, then applying cyclicity of partial traces to argue that Eve's conditional states are the same as if $\ket{\psi}_{ABE}$ is replaced by the state obtained after the projective measurements described by $\pure{\lambda},\pure{\mu}$ are applied to $A'',B''$. Conceptually, this aligns more closely with the spirit of the argument in~\cite{Pironio09}. (However, note that the conditional states of the full $ABE$ system \emph{are} changed if those projective measurements are performed first --- only the reduced states on $E$ are unaffected.)} 
\begin{align}
\rho_{E|a_x b_y} 
&= \sum_{\lambda\mu} p_{\lambda\mu}\, \rho^{\lambda\mu}_{E|a_x b_y}, \text{ where } \rho^{\lambda\mu}_{E|a_x b_y} \defvar \trace_{A'B'} \left[ (\proj^\lambda_{a|x}\otimes\proj^\mu_{b|y}\otimes\id_E) \ket{\psi^{\lambda\mu}}_{A'B'E} \right],
\end{align}
introducing an abbreviated notation $\trace_\alpha[\ket{\Psi}] \defvar \trace_\alpha[\pure{\Psi}]$ for pure states.
All our quantities of interest are completely determined by the states $\rho_{E|a_x b_y}$. Specifically, we have $\pr'(a,b|x,y) = \trace[\rho_{E|a_x b_y}]$, and in the case where Bob performs the key-generating measurement, Eve's states conditioned on the outcome are $\rho_{E|b_1} = \sum_{a_x} \rho_{E|a_x b_1}$ (where any input $x$ for Alice can be chosen, due to no-signalling. 
After noisy pre-processing (which can be modelled as the mapping $\pure{b} \to (1-p)\pure{b} + p\pure{\flip{b}}$), 
the final c-q state can be directly computed as
\begin{align}
\hat{\rho}_{\Bkeyreg E} = \sum_{b_1} \pure{b_1}_{\Bkeyreg} \otimes \left(\sum_{\lambda\mu} p_{\lambda\mu}\, \hat{\rho}^{\lambda\mu}_{E|b_1}\right), \text{ where }
\hat{\rho}^{\lambda\mu}_{E|b_1} \defvar (1-p)\rho^{\lambda\mu}_{E|b_1} + p\rho^{\lambda\mu}_{E|\flip{b_1}} .
\end{align}
Our objective is to bound $H(\Bkey|E)_{\hat{\rho}}$. Note that by concavity of the conditional entropy, we have $\sum_{\lambda\mu} p_{\lambda\mu} {H}(\Bkey|E)_{\hat{\rho}^{\lambda\mu}} \leq {H}(\Bkey|E)_{\hat{\rho}}$, where ${H}(\Bkey|E)_{\hat{\rho}^{\lambda\mu}}$ refers to the value that would be given by the conditional states $\hat{\rho}^{\lambda\mu}_{E|b_1}$. 

We shall now show that if we introduce systems $R,R',R''$ with dimensions $\dim(R)=\dim(R')=2$ and $\dim(R'')=\dim(A'')\dim(B'')$, we can define a state $\sigma_{ABERR'R''}$ in the form\footnote{To be fully precise, Eq.~\eqref{eq:betterstate} should instead state that $\sigma^{\lambda\mu}_{A'B'}$ is Bell-diagonal (also, $\proj^\lambda_{a|x}, \proj^\mu_{b|y}$ describe Pauli measurements in the $X$-$Z$ plane) in \emph{some} choice of local bases, which may depend on the pair $(\lambda,\mu)$. However, since this is already sufficient to derive the qubit entropy bound in Sec.~\ref{app:entropybound}, that bound will apply to each $\ket{\psi^{\lambda\mu}}_{A'B'E}$ term individually, and the rest of our argument carries through. Alternatively, we could construct a new state and projectors in which all $\sigma^{\lambda\mu}_{A'B'}$ are simultaneously Bell-diagonal, though the construction would require giving Alice and Bob a copy of each other's variables $\lambda,\mu$.}
\begin{align}
\sigma = \sum_{\lambda\mu} p_{\lambda\mu} \pure{\lambda\mu}_{A''B''} \otimes \pure{\lambda\mu}_{R''} \otimes \sigma^{\lambda\mu}_{A'B'ERR'} 
\text{\quad where $\sigma^{\lambda\mu}_{A'B'} = \text{Tr}_{ERR'} \sigma^{\lambda\mu}_{A'B'ERR'}$ are Bell-diagonal,}
\label{eq:betterstate}
\end{align}
such that when the measurements $\proj_{a|x},\proj_{b|y}$ are performed on $\sigma$, we get outcome distribution $\pr'(a,b|x,y)$, and after performing noisy pre-processing to get a state $\hat{\sigma}_{\Bkeyreg ERR'R''}$, we have $H(\Bkey|ERR'R'')_{\hat{\sigma}} = \sum_{\lambda\mu} p_{\lambda\mu} {H}(\Bkey|E)_{\hat{\rho}^{\lambda\mu}} \leq {H}(\Bkey|E)_{\hat{\rho}}$. Hence when finding the minimum conditional entropy over all states compatible with $\pr'(a,b|x,y)$, we can restrict ourselves to states in the form~\eqref{eq:betterstate}.

To achieve this, we define
\begin{align}
\begin{gathered}
\sigma^{\lambda\mu}_{A'B'ERR'} \defvar \sum_{rr'} \frac{1}{4} \pure{rr'}_{RR'} \otimes \pure{\phi^{\lambda\mu rr' }}_{A'B'E},\\
\text{with } \ket{\phi^{\lambda\mu 00}} \defvar \ket{\psi^{\lambda\mu}}, \ket{\phi^{\lambda\mu 10}} \defvar U_{YY}\ket{\psi^{\lambda\mu}}, \ket{\phi^{\lambda\mu 01}} \defvar \ket{\psi^{\lambda\mu}}^*,
\ket{\phi^{\lambda\mu 11}} \defvar \left( U_{YY}\ket{\psi^{\lambda\mu}}\right)^*.
\end{gathered}
\label{eq:belldiag}
\end{align}
Here we denote $U_{YY} \defvar \sigma_Y \otimes \sigma_Y \otimes \id_E$, and the Pauli operators $\sigma_Y$ and complex conjugate $^*$ are both defined with respect to a specific basis constructed in~\cite{Pironio09} that ensures the reduced states $\sigma^{\lambda\mu}_{A'B'}$ are Bell-diagonal. To briefly outline the construction: if we pick any local bases for $A',B'$ and define $\sigma_Y$ with respect to those bases, the state $\omega^{\lambda\mu}_{A'B'} \defvar (1/2) \trace_{E}\left[\pure{\phi^{\lambda\mu 00}}_{A'B'E} + \pure{\phi^{\lambda\mu 10}}_{A'B'E}\right]$ will be block-diagonal 
(in $2\times2$ blocks) 
in the induced Bell basis, leaving only two nonzero off-diagonal terms. In particular, there is enough freedom in the basis choice to pick one in which $\proj^\lambda_{a|x}, \proj^\mu_{b|y}$ describe Pauli measurements in the $X$-$Z$ plane \emph{and} the remaining off-diagonal terms of $\omega^{\lambda\mu}_{A'B'}$ are purely imaginary. In that case, we see from definition~\eqref{eq:belldiag} that $\sigma^{\lambda\mu}_{A'B'} = \omega^{\lambda\mu}_{A'B'} + \big(\omega^{\lambda\mu}_{A'B'}\big)^*$ will indeed be Bell-diagonal. Finally, there is still enough freedom in the basis choices to ensure that the Bell-state weights can be ordered such that $L_1\geq L_2$ and $L_3\geq L_4$. 

For the state $\sigma$, Eve's conditional states are straightforwardly computed:
\begin{align}
\begin{gathered}
{\sigma}_{ERR'R''|a_x b_y} = \sum_{\lambda\mu rr' } \frac{1}{4} \, p_{\lambda\mu}\, \pure{\lambda\mu rr'}_{R''RR'} \otimes \sigma^{\lambda\mu rr'}_{E|a_x b_y},\\
\text{where } \sigma^{\lambda\mu rr'}_{E|a_x b_y} \defvar \trace_{A'B'} \left[ (\proj^\lambda_{a|x}\otimes\proj^\mu_{b|y}\otimes\id_{E}) |\phi^{\lambda\mu rr'}\rangle_{A'B'E} \right].
\end{gathered}
\end{align}
Recall that all $\proj^\lambda_{a|x}, \proj^\mu_{b|y}$ describe Pauli measurements in the $X$-$Z$ plane. Therefore, we have $\sigma_Y \proj^\lambda_{a|x} \sigma_Y = \proj^\lambda_{\flip{a}|x}$ and analogously for $\proj^\lambda_{b|y}$. By cyclicity of partial trace, this implies $\sigma^{\lambda\mu 10}_{E|a_x b_y} = \sigma^{\lambda\mu 00}_{E|\flip{a_x} \flip{b_y}}$. 
Additionally, taking the complex conjugate of the state does not affect the outcome probabilities because all matrix elements of $\proj^\lambda_{a|x}, \proj^\mu_{b|y}$ are real. Hence as claimed, the outcome probabilities are indeed equal to
\begin{align}
\sum_{\lambda\mu rr' } \frac{p_{\lambda\mu} }{4} \, \trace \left[\sigma^{\lambda\mu rr'}_{E|a_x b_y}\right] 
= \sum_{\lambda\mu r} \frac{p_{\lambda\mu}}{2} \, \trace \left[\sigma^{\lambda\mu r0}_{E|a_x b_y}\right]
= \frac{\pr'(a,b|x,y) + \pr'(\flip{a},\flip{b}|x,y)}{2}
= \pr'(a,b|x,y).
\end{align}
(Note that the individual terms $\trace \left[\sigma^{\lambda\mu r0}_{E|a_x b_y}\right]$ may not be symmetrised, but after summing over $\lambda\mu$, they yield the distribution $\pr'$, which is symmetrised by definition.) 

Eve's conditional states after noisy pre-processing are 
straightforwardly computed (writing $\sigma^{\lambda\mu rr'}_{E|b_1} = \sum_{a_x} \sigma^{\lambda\mu rr'}_{E|a_x b_1}$):
\begin{align}
\hat{\sigma}_{ERR'R''|b_1} = \sum_{\lambda\mu rr' } \frac{1}{4} \, p_{\lambda\mu}\, \pure{\lambda\mu rr'}_{RR'R''} \otimes \hat{\sigma}^{\lambda\mu rr'}_{E|b_1}, \text{ where } \hat{\sigma}^{\lambda\mu rr'}_{E|b_1} = (1-p)\sigma^{\lambda\mu rr'}_{E|b_1} + p \sigma^{\lambda\mu rr'}_{E|\flip{b_1}}.
\end{align}
Since $RR'R''$ are classical registers, we have $H(\Bkey|ERR'R'')_{\hat{\sigma}} = \sum_{\lambda\mu rr' } (p_{\lambda\mu}/4) {H}(\Bkey|E)_{\hat{\sigma}^{\lambda\mu rr' }}$, where ${H}(\Bkey|E)_{\hat{\sigma}^{\lambda\mu rr' }}$ refers to the value that would be given by the conditional states $\hat{\sigma}^{\lambda\mu rr'}_{E|b_1}$.
Also, since $\sigma^{\lambda\mu 10}_{E|a_x b_y} = \sigma^{\lambda\mu 00}_{E|\flip{a_x} \flip{b_y}}$, we have
\begin{align}
\hat{\sigma}^{\lambda\mu 10}_{E|b_1} 
= (1-p)\sigma^{\lambda\mu 10}_{E|b_1} + p \sigma^{\lambda\mu 10}_{E|\flip{b_1}} 
= (1-p)\sigma^{\lambda\mu 00}_{E|\flip{b_1}} + p \sigma^{\lambda\mu 00}_{E|b_1} 
= \hat{\sigma}^{\lambda\mu 00}_{E|\flip{b_1}}.
\end{align}
Since entropy is invariant under relabelling of basis states, this implies ${H}(\Bkey|E)_{\hat{\sigma}^{\lambda\mu 10}} = {H}(\Bkey|E)_{\hat{\sigma}^{\lambda\mu 00}}$, which in turn equals ${H}(\Bkey|E)_{\hat{\rho}^{\lambda\mu}}$. Additionally, since $H(\Bkey|E)$ can be computed entirely in terms of the eigenvalues of the conditional states, which are invariant under complex conjugation, we have ${H}(\Bkey|E)_{\hat{\sigma}^{\lambda\mu 01}} = {H}(\Bkey|E)_{\hat{\sigma}^{\lambda\mu 00}}$ and ${H}(\Bkey|E)_{\hat{\sigma}^{\lambda\mu 11}} = {H}(\Bkey|E)_{\hat{\sigma}^{\lambda\mu 10}}$. Put together, this implies that $H(\Bkey|ERR'R'')_{\hat{\sigma}} = \sum_{\lambda\mu} p_{\lambda\mu} {H}(\Bkey|E)_{\hat{\sigma}^{\lambda\mu 00}} = \sum_{\lambda\mu} p_{\lambda\mu} {H}(\Bkey|E)_{\hat{\rho}^{\lambda\mu}}$, as claimed

In summary, we can without loss of generality consider a state of the form \eqref{eq:betterstate}. For the calculations in the subsequent section, we will for brevity replace the states $\sigma^{\lambda\mu}_{A'B'ERR'}$ with pure states $\pure{\Psi^{\lambda \mu}}_{A'B'E}$ 
such that the reduced states on $A'B'$ match, i.e.~$\trace_{ERR'}[\sigma^{\lambda\mu}_{A'B'ERR'}] = \trace_{E}[\pure{\Psi^{\lambda \mu}}_{A'B'E}]$. (This can be done without reducing Eve's power, since it is isometrically equivalent to giving Eve purifications of the states $\sigma^{\lambda\mu}_{A'B'}$.) Explicitly, this means we consider a state of the form
\begin{equation}
    \sum_{\lambda\mu} p_{\lambda\mu} \pure{\lambda\mu}_{A''B''} \otimes \pure{\lambda\mu}_{R''} \otimes
    \pure{\Psi^{\lambda \mu}}_{A'B'E},
\end{equation}
where the state $\ket{\Psi^{\lambda \mu}}\in \mathds{C}^2_{A'}\otimes\mathds{C}^2_{B'}\otimes \mathcal{H}_E$  inside each block 
is of the form
\be\label{eq:bell-diagonal}
\ket{\Psi^{\lambda \mu}}= \sum_{i=1}^4 \sqrt{L_i} \ket{i}_E \ket{\Phi_i}_{A'B'},
\ee
with $L_1\geq L_2$ and $L_3\geq L_4$, and where the states $\ket{i}_E$ form an orthonormal basis for a 4-dimensional subspace of $\mathcal{H}_E$. (To be precise, each value of $(\lambda,\mu)$ potentially corresponds to a different Bell basis, weights $L_i$ and orthonormal states $\ket{i}_E$, but for brevity we will not explicitly denote this.)\\

\section{Eve's entropies}
\label{app:entropybound}

In order to obtain the security guarantee for our protocol, we need a lower bound on $$H(\Bkey|E)=H(\Bkey)-(H(E)-H(\Bkey|E))$$ when the symmetrized statistics $\pr'(a,b|x,y)$ are observed (c.f. Appendix A). When symmetrisation is applied we have $H(\Bkey)=H(\rv{B}_1)=1$, so it suffices to find an upper bound on 
\begin{equation}\label{eq: Eves info}
   H(E)-H(E|\Bkey) =H(\rho_E) - \sum_{b_1} p_{b_1} H(\hat{\rho}_{E|b_1}).
\end{equation}
Given the block-diagonal structure of the state $\rho_{ABE}$, the quantity $H(E)-H(E|\Bkey)$ is the weighted sum of the same quantity across all subspaces labeled by $\lambda$ and $\mu$. Hence, the problem can be tackled by upper bounding this entropy difference inside each block (we will explain this in more detail in Sec.~\ref{sec:CHSHbound}). 

We hence focus on a single block specified by $(\lambda,\mu)$. For brevity, we will suppress the specification of the parameters $(\lambda,\mu)$ in this subsection --- it will be understood that we are considering a block specified by some particular value of those parameters.
For any state of the form of Eq.~\eqref{eq:bell-diagonal}, the entropy of Eve's partial state is
\be
H(E) = H\left(\sum_{i=1}^4 L_i \pure{i}\right) = H({\bf L}).
\ee
Bob's measurement $B_1$ is parametrized by a single angle $\phi$, that is
\be
B_1 = \cos(\phi) Z + \sin(\phi)X.
\ee

To obtain Eve's conditional entropy recall that Bob secretly decides to flip his measurement outcome with probability $p<\frac{1}{2}$. Hence, Eve's conditional states are given by the mixtures (from this point on, states are normalised unless otherwise specified)
\be\begin{split}
\hat{\rho}_{E|\pm1} &= 2\, (1-p) \,  \textrm{tr}_{A'B'}\left(\pure{\Psi }\id_{A'E}\otimes\frac{\id_{B'}\pm B_1}{2}\right) +   2\, p \, \textrm{tr}_{A'B'} \left(\pure{\Psi }\id_{A'E}\otimes\frac{\id_{B'}\mp B_1}{2}\right)\\
&= \textrm{tr}_{A'B'} \pure{\Psi }\id_{A'E}\otimes(\id_{B'}\pm (1-2p) B_1).
\end{split}
\ee

One can easily see that $H(\hat{\rho}_{E|+1})=H(\hat{\rho}_{E|-1})$ (c.f. below), hence for the the state $\ket{\Psi}$ one has
\be\label{eq:conditional state}\begin{split}
H(E|B_1) &=H(\hat{\rho}_{E|+1}) \quad \text{with}\\
\hat{\rho}_{E|+1}&= \left(
\begin{array}{cccc}
 L_1 & 0 & \sqrt{L_1} \sqrt{L_3} \sqrt{q} \cos (\phi ) & \sqrt{L_1} \sqrt{L_4} \sqrt{q} \sin (\phi ) \\
 0 & L_2 & \sqrt{L_2} \sqrt{L_3} \sqrt{q} \sin (\phi ) & - \sqrt{L_2} \sqrt{L_4} \sqrt{q} \cos (\phi ) \\
 \sqrt{L_1} \sqrt{L_3} \sqrt{q} \cos (\phi ) & \sqrt{L_2} \sqrt{L_3} \sqrt{q} \sin (\phi ) & L_3 & 0 \\
 \sqrt{L_1} \sqrt{L_4} \sqrt{q} \sin (\phi ) & -\sqrt{L_2} \sqrt{L_4} \sqrt{q} \cos (\phi ) & 0 & L_4 \\
\end{array}
\right)
\end{split}
\ee
where it is convenient to introduce the parameter
\begin{equation}
q = (1-2p)^2 \in (0,1].
\end{equation}

\subsection{Dependence on Eve's conditional entropy on the measurement angle $\phi$}\label{sec:angleDependence}

The entropy of a density matrix is a function of its eigenvalues, and can be obtained from the characteristic polynomial of the matrix $\hat{\rho}_{E|\pm 1}$, which is
\be\label{eq:polynomial}
\begin{split}
P_{{\bf L},q,\phi}(x)= \det(x 
\id -\hat{\rho}_{E|+1}) &= x^4 - x^3 + a_2 x^2 + a_1 x +a_0 \quad \text{with}\\
a_0 &=  L_1 L_2 L_3 L_4 (q-1)^2\\
a_1 &= -\left(L_1 L_2 L_3+L_2 L_4 L_3+L_1 L_2 L_4+L_1 L_3 L_4\right) (1-q)\\
a_2 &= (L_1 L_2+L_3 L_2+L_4 L_2+L_1 L_3+L_1 L_4+L_3 L_4) \\&-\frac{1}{2} \left(L_1+L_2\right) \left(L_3+L_4\right) q -\frac{1}{2} \left(L_1-L_2\right) \left(L_3-L_4\right) q \cos(2 \phi).
\end{split}
\ee

Remarkably, $P(x)$ only depends on the measurement angle $\phi$ via $\cos(2 \phi)$, and so does the entropy $H(E|\rv{B}_1)$. In particular, it implies $H(\hat{\rho}_{E|+1})=H(\hat{\rho}_{E|-1})$, as mentioned previously, since the two states $\hat{\rho}_{E|+1}[\phi] =\hat{\rho}_{E|-1}[\phi+\pi]$ are related by inverting the measurement direction and $\cos(2\phi) = \cos(2\phi +2\pi)$.
Furthermore, one can express Eve's conditional entropy with the help of the variable $C = \cos(2\phi)\in[-1,1]$
\be
H(E|\rv{B}_1)= s({\bf L}, q, C) = H\Big(\text{Roots}\big[P_{{\bf L},q, \text{acos}(C)/2}(x)\big]\Big),
\ee
as the Shannon entropy $H$ of the list of roots of the polynomial $P_{{\bf L},q,\text{acos}(C)/2}(x)$ in Eq.~\eqref{eq:polynomial}. The rest of this section is devoted to the proof of the following statement.

\begin{theorem}
The conditional entropy of Eve $s({\bf L},q,C)$ is a monotonically decreasing function  of $C$.
\end{theorem}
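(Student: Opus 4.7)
The starting point is the observation that in the characteristic polynomial $P_{\mathbf{L},q,\phi}(x)$ of Eq.~\eqref{eq:polynomial}, only the coefficient $a_2$ depends on $\phi$, and this dependence is affine in $C=\cos(2\phi)$ with slope $-\tfrac{1}{2}(L_1-L_2)(L_3-L_4)q$, which is non-positive under the standing ordering $L_1\geq L_2$, $L_3\geq L_4$ and $q\in(0,1]$. Proving $s(\mathbf{L},q,C)$ non-increasing in $C$ therefore reduces to proving that the Shannon entropy of the roots of $P$ is non-decreasing in $a_2$ while $a_0,a_1$ are held fixed.

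\textbf{Step 2: Implicit differentiation.} Treating the four eigenvalues $\lambda_i$ as implicit functions of $a_2$ via $P(\lambda_i)=0$ and using $\partial_{a_2}P(x)=x^2$, one gets $d\lambda_i/da_2 = -\lambda_i^2/P'(\lambda_i)$. The trace constraint $\sum_i\lambda_i=1$ is preserved (the $x^3$ coefficient of $P$ is fixed), which gives $\sum_i\lambda_i^2/P'(\lambda_i)=0$; this is also the standard residue identity for $x^2/P(x)$ since $\deg(x^2)<\deg(P)-1$. Substituting into the chain rule $ds/da_2 = -\sum_i(1+\log\lambda_i)(d\lambda_i/da_2)$ makes the ``$1$'' contribution drop out, reducing the claim to
\[
\sum_{i=1}^{4}\frac{\lambda_i^{\,2}\,\log\lambda_i}{P'(\lambda_i)}\;\geq\;0.
\]

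\textbf{Step 3: Sign argument.} This final inequality is the technical crux. Because $P'(\lambda_i)$ alternates in sign with the ordering of the $\lambda_i$ while the logs are non-positive, the sum is a genuine cancellation identity rather than a term-by-term nonnegative quantity. The plan is to exploit the structural fact that at the extremes $C=\pm 1$ the matrix $\hat\rho_{E|+1}$ block-diagonalises into two $2{\times}2$ blocks --- pairing $\{L_1,L_3\}$ with $\{L_2,L_4\}$ at $C=+1$ and $\{L_1,L_4\}$ with $\{L_2,L_3\}$ at $C=-1$ --- so $P$ admits a real factorisation of the form $(x^2-\alpha x+\beta)(x^2-\gamma x+\delta)$ with $\alpha+\gamma=1$ and $\beta\delta=a_0$. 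Parameterising a one-parameter family of such quadratic factorisations that tracks the allowed variation of $a_2$ (with $a_0,a_1$ fixed), I would then group the four terms by factor and reduce the problem to a two-term comparison between binary-entropy-type expressions, whose sign can be controlled by concavity of $-x\log x$. An alternative route I would pursue in parallel is operational: exhibit a unital CP map (for instance, averaging against a suitable one-parameter symmetry of $\hat\rho_{E|+1}$) that sends $\hat\rho_{E|+1}[\phi_2]$ to $\hat\rho_{E|+1}[\phi_1]$ whenever $\cos(2\phi_1)<\cos(2\phi_2)$, yielding the majorisation $\vec\lambda(\phi_1)\prec\vec\lambda(\phi_2)$ and letting Schur-concavity of $H$ finish the argument immediately.

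\textbf{Main obstacle.} The real difficulty is precisely Step~3. Both routes require careful simultaneous use of the three constraints $e_1=1$, $e_3=-a_1$, $e_4=a_0$ that are held fixed while only $e_2=a_2$ varies: the algebraic route because the quadratic factorisation is not unique away from the endpoints and one must choose it compatibly with the deformation, and the operational route because a global symmetry of $\hat\rho_{E|+1}$ that cleanly lowers $C$ is not obvious from the form of the matrix. Showing that the signed sum in Step~2 is $\geq 0$ is thus not a routine calculation but the heart of the proof.
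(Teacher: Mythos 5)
Your Steps 1 and 2 correctly reproduce the reduction that the paper itself uses: only $a_2$ depends on $C$, with non-positive slope $\xi=-\tfrac12 q(L_1-L_2)(L_3-L_4)$, implicit differentiation gives $dp_i/da_2=-p_i^2/P'(p_i)$, the residue identity kills the ``$1$'' in $1+\log p_i$, and everything hinges on the single inequality $\sum_i p_i^2\log p_i/P'(p_i)\ge 0$. But that inequality --- which you yourself identify as ``the heart of the proof'' --- is exactly what you do not prove, so the proposal has a genuine gap rather than an alternative route. Of the two plans you sketch, neither is likely to close it as stated. The quadratic-factorisation idea relies on a block structure of $\hat\rho_{E|+1}$ that exists only at $C=\pm1$; for intermediate $C$ a real factorisation into two quadratics exists but is non-canonical, and nothing constrains it to respect the deformation in $a_2$ with $e_1,e_3,e_4$ fixed. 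The majorisation idea asserts something strictly stronger than entropy monotonicity: from the derivative formula one sees that $p_1$ and $p_3$ increase with $C$ while $p_2$ and $p_4$ decrease, so the partial sum $p_1+p_2$ has no evident sign of variation and the claimed $\vec\lambda(\phi_1)\prec\vec\lambda(\phi_2)$ is not established (and may simply be false); Schur-concavity cannot be invoked without it.

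For comparison, the paper closes this step by brute force but elementarily: after ruling out degenerate spectra (showing the only degenerate possibility is $p_3=p_4=0$, handled by an explicit two-eigenvalue formula), it reparametrises the ordered roots multiplicatively as $p_2=r_1p_1$, $p_3=r_1r_2p_1$, $p_4=r_1r_2r_3p_1$, factors the signed sum as a negative prefactor times a function $G(r,x,y)$ of three ratio variables, and bounds $G\le 0$ using concavity of $\log(x)/(1-x)$ and $y\log(y)/(1-y)$ together with the standard estimates $\tfrac{x^2-1}{2x}\le\log x\le 2\tfrac{x-1}{x+1}$. Note also that your Step 2 silently assumes $P'(p_i)\neq 0$; the degenerate case must be treated separately (as the paper does), since a tangential root at nonzero $p$ would make the perturbed polynomial lose real roots and is excluded precisely by the requirement that $\hat\rho_{E|+1}$ remain a valid state for neighbouring $C$. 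As it stands, your submission is a correct problem setup plus a research plan for the crucial inequality, not a proof.
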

\begin{proof}
In the characteristic polynomial $P_C(x)=P_{{\bf L},q,C}$ of Eq.\eqref{eq:polynomial}, only the coefficient $a_2$ has a dependence on $C$. It will be convenient to express $P_C$ as
\be
P_C(x) = P_0(x) + \xi C x^2 \quad \text{with} \quad \xi = -\frac{1}{2}q(L_1-L_2)(L_3-L_4)\leq 0,
\ee
where $P_0(x)=P_{C=0}(x)$ does not depend on C. Note that for $\xi=0$ the characteristic polynomial does not depend on $C$ and the theorem holds trivially, so that we assume $\xi< 0$ in the following. Since $P_C$ is the characteristic polynomial of a density matrix it has four real roots $p_i\in[0,1]$ -- the eigenvalues of the state with the convention $p_i\geq p_{i+1}$, so
\be\label{eq:polyfactor}
P_C(x) = (x-p_1)(x-p_2)(x-p_3)(x-p_4).
\ee
The remainder of the proof is done in three steps:
\begin{enumerate}
    \item[A.] First, we show that the nontrivial case is where the roots of the polynomial are not degenerate; $p_1>p_2>p_3>p_4$. For all other cases the proof of the theorem is either straightforward, or the values $p_i$ are incompatible with eigenvalues of Eve's state conditional on the considered measurement.
    \item[B.] Given the dependence of the polynomial $P_C(x)$ on $C$ we express the derivatives of the roots $p_i'$ with respect to $C$ as simple functions of $p_i$ and $\xi$. This allows us to express the rate of entropy change as $H'({\bf p})$ in a rather simple form.
    
    \item[C.] By an appropriate change of variables we show that $\textrm{sign}(H'({\bf p}))= \textrm{sign}( G)$, where $G$ is a simple function of three real parameters restricted to some intervals. Finally, we show that $G\leq 0$ on the whole of its domain. This implies $H'({\bf p})\leq 0$ and proves the theorem.
\end{enumerate}

{\bf A.} First, we show that if $C>-1$ any two eigenvalues can not be degenerate unless they are equal to zero. We will prove it by contradiction. Assume that there are two such eigenvalues $p_k=p_{k+1}\neq 0$ for some value $\bar C\in(-1,1)$, the characteristic polynomial reads
\be
P_{\bar C}(x) = (x-p_k)^2(x-p)(x-p').
\ee
The polynomial $P_{\bar C}(x)$ is tangent to the $y=0$ line at $x=p_k$ and has two other intersections with the $y=0$ line at $x=p$ and $x=p'$. It follows that $P_{\bar C(x)}+ \delta x^2$  has at most two real roots for any positive $\delta$. Hence, the perturbed polynomial $P_{\bar C - \delta}(x) = P_{\bar C} - \delta \xi x^2$ only has two real roots and does not correspond to a valid density matrix. This is a contradiction, since for $C>-1$ the value $\bar C - \delta$ corresponds to a valid measurement angle for some positive $\delta$ and should lead to a valid conditional states (with four real eigenvalues). We have thus proved that there are two possibilities
\be
\begin{split}
    &(i)\quad p_1>p_2\geq p_3=p_4=0,\\
    &(ii)\quad p_1>p_2>p_3>p_4.
    \end{split}
\ee\\

The case $(i)$ requires $a_1=a_0=0$ in Eq.~\eqref{eq:polynomial}, which means that at least two of the $L_i$ are zero. Given that $\xi\neq 0$ and the partial ordering of $L_i$ it only leaves one possibilities $L_4=L_2=0$. In this case one can straightforwardly compute the two nonzero eigenvalues of $\hat{\rho}_{E|+1}$ given by
\be
p_\pm = \frac{1}{2} \left(L_1+L_3\pm\sqrt{2 L_3 L_1 q (1+C)+(L_1-L_3)^2}\right).
\ee
From this expression it is obvious that the purity of Eve's conditional state increases with $C$, while its entropy decreases.\\

{\bf B.} The case $ p_1>p_2>p_3>p_4$ is more interesting. We will now study how the eigenvalues $p_i$ change under small perturbations of $C\to C+\delta$ for $-1<C<1.$ By continuity, we know that the four roots $p_i(C)$ of the polynomial $P_C(x)$ in Eq.\eqref{eq:polyfactor} are differentiable functions of $C$. Let us express the roots of the perturbed polynomials $P_{C+\delta}(x)$ as $p_i(C+\delta)=p_i(C)+\Delta^i_\delta$, with $p_i'=\frac{d p_i(C)}{dC}=\lim_{\delta\to 0} \frac{\Delta^i_\delta}{\delta}$. Given the identify $P_{C+\delta}(x)=P_{C}(x)+ \delta \xi x^2$ we have for each i
\be\begin{split}
0=P_{C+\delta}\big(p_i(C+\delta)\big) = P_{C+\delta}\big(p_i&+\Delta^i_\delta\big) = P_C\big(p_i+\Delta^i_\delta\big) +\delta \xi(p_i+\Delta^i_\delta\big)^2\\
\implies\ P_C\big(p_i+\Delta^i_\delta\big)  &= - \delta \xi(p_i+\Delta^i_\delta\big)^2.
\end{split}
\ee
We develop the last expression in first order of $\delta$. Given that $P_C\big(x+\Delta^i_\delta\big) =P_C\big(x) + \frac{dP_C(x)}{dx} \Delta^i_\delta + o(\delta),$ we get for each $i$
\be
\underbrace{P_C(p_i)}_{=0} + \frac{dP_C(x)}{dx}|_{x=p_i}  \Delta^i_\delta = -\delta \xi p_i^2.
\ee
This implies in the limit $\delta\to0$
\be\label{eq:probaprime}
p_i'= -\xi \frac{ p_i^2}{\frac{d P_C(x)}{dx}|_{x=p_i}} =-\xi \frac{p_i^2}{\prod_{k\neq i} (p_i-p_k)}.
\ee\\

Ultimately, we are interested in the entropy $H({\bf p}) = -\sum_i p_i \log(p_i) $ and its susceptibility to variations of $C$.  For convenience, we use a natural logarithm $\log$ instead of $\log_2$ in the definition of the entropy here and until the end of Sec.~\ref{sec:angleDependence}. This does not affect the validity of the following discussion as the two entropies are related by a constant factor. Our aim is to show that Eq.~\eqref{eq:probaprime} implies that $H'({\bf p})=\frac{d H({\bf p})}{dC}$ is negative. Let us first express this quantity
\begin{equation}\nonumber
   -  H'({\bf p}) = \sum_i\left(p_i'\log(p_i)+ p_i \frac{p_i'}{p_i}\right) = \sum_i p_i'\log(p_i) + \sum_i p_i' =\sum_i p_i'\log(p_i).
\end{equation}
Combining the previous expression with Eq.~\eqref{eq:probaprime} gives
\begin{equation}
    H'({\bf p}) = \xi \sum_i \frac{p_i^2\log(p_i)}{\Pi_{k\neq i}(p_i-p_k)}.
\end{equation}
Given that $\xi<0,$ we want to show that $F({\bf p})=\sum_i \frac{p_i^2\log(p_i)}{\Pi_{k\neq i}(p_i-p_k)}$ is positive. To do so, let us change the variables $p_1>p_2>p_3>p_4$ to
\be
\begin{split}
    p_1= p, \quad
    p_2= r_1 p, \quad
    p_3= r_2 p_2 = r_1 r_2 p, \quad \text{and} \quad
    p_4= r_3 p_3 = r_1 r_2 r_3 p,
\end{split}
\ee
with $\frac{1}{4}\leq p \leq 1$, $0 < r_{1,2}< 1$, and $0 \leq r_3 < 1$ (we already covered the case where $p_3=p_4=0$). We express
\begin{align}\nonumber
F({\bf p})&=\sum_i \frac{p_i^2\log(p_i)}{\Pi_{k\neq i}(p_i-p_k)} = \sum_i \frac{p_i^2\log(\frac{p_i}{p}p)}{\Pi_{k\neq i}(p_i-p_k)}= \sum_i \frac{p_i^2(\log(\frac{p_i}{p}) +\log(p))}{\Pi_{k\neq i}(p_i-p_k)}\\
&=\sum_i \frac{p_i^2\log(\frac{p_i}{p})}{\Pi_{k\neq i}(p_i-p_k)} +\log(p) \sum_i \frac{p_i^2}{\Pi_{k\neq i}(p_i-p_k)}.
\nonumber
\end{align}
From Eq.~\eqref{eq:probaprime} it follows that the second term in the last line is zero. Indeed
\be\nonumber
\log(p) \sum_i \frac{p_i^2}{\Pi_{k\neq i}(p_i-p_k)} = \log(p) \frac{\sum p_i'}{-\xi} = -\frac{\log(p)}{\xi} \Big(\underbrace{\sum_i p_i}_{=1}\Big)'= 0.
\ee
Hence, we obtain
\begin{align}\nonumber
&F({\bf p}) = \sum_i \frac{p_i^2\log(\frac{p_i}{p})}{\Pi_{k\neq i}(p_i-p_k)} =  \frac{1}{p}\sum_i \frac{(\frac{p_i}{p})^2\log(\frac{p_i}{p})}{\Pi_{k\neq i}((\frac{p_i}{p})-(\frac{p_i}{p}))}=\\&
\frac{1}{p}\left( 
 \frac{r_1^2 \log(r_1)}{(r_1-1)(r_1 - r_1 r_2)(r_1-r_1 r_2 r_3)}+
  \frac{r_1^2 r_2^2  \log(r_1 r_2)}{(r_1 r_2-1)(r_1 r_2 - r_1)(r_1 r_2-r_1 r_2 r_3)}+
  \frac{r_1^2 r_2^2 r_3^2 \log(r_1 r_2 r_3)}{(r_1 r_2 r_3-1)(r_1 r_2 r_3 - r_1)(r_1 r_2 r_3-r_1 r_2 )}
   \right).
   \nonumber
\end{align}\\

{\bf C.} Straightforward manipulations allow one to rewrite this expression as
\be\begin{split}
F({\bf p}) &= -\frac{(1-r_2)(1-r_2 r_3 )}{p(1-r_1 r_2 r_3)} G({\bf r})\quad \text{with}\\
G({\bf r}) &=\frac{1}{1-r_1 r_2}\left(\frac{\log(r_1)}{1-r_1}- r_2 \frac{\log(r_2)}{1-r_2}\right)  -\frac{r_2r_3}{1-r_2 r_3}\left(\frac{\log(r_2)}{1-r_2}- r_3 \frac{\log(r_3)}{1-r_3}\right).
\end{split}
\ee
Given that the prefactor $-\frac{(1-r_2)(1-r_2 r_3 )}{p(1-r_1 r_2 r_3)} $ is always negative, we want to show that $G({\bf r})\leq 0$ for all $r_1,r_2\in(0,1)$ and $r_3 \in[0,1)$. To do so we change the variables again to 
\be
r = r_2, \quad x= r_1 r_2, \quad \text{and} \quad y= r_3 r_2
\ee
with $r\in(0,1)$, $x \in (0,r)$ and $y \in [0,r)$. With the new variables, $G$ can be expressed as
\be
G(r,x,y) = \frac{r}{r-x}\left(\frac{\log(x)}{1-x} - \frac{\log(r)}{1-r}\right) +
 \frac{y}{r-y}\left(\frac{y \log(y)}{1-y}- \frac{r \log(r)}{1-r}\right).\label{eq:G op}
\ee\\

Let us analyze the two terms separately. For this, it will be useful to use the following bound on the logarithm function, valid for $0<x\leq1$~\cite{Flemming04}:
\be\label{eq:logBounds}
\frac{x^2-1}{2x} \leq \log(x) \leq 2\frac{x-1}{x+1}.
\ee

\begin{figure}[h!]
	\centering
\includegraphics[width=0.8\textwidth]{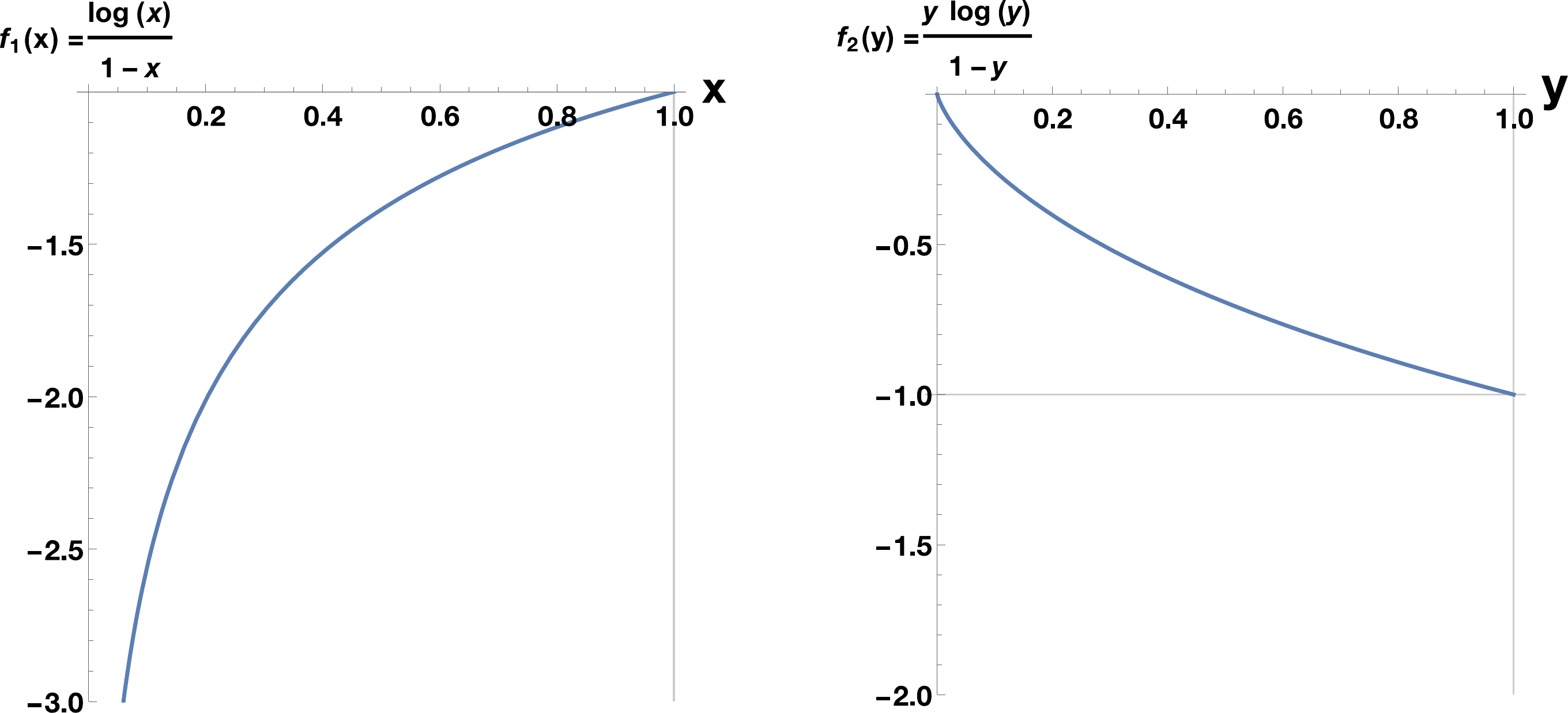}
 \caption{The plots of $f_1(x)=\frac{\log(x)}{1-x}$ and $f_2(y)=\frac{y \log(y)}{1-y}$.}
	\label{fig:plots}
\end{figure}

First, consider the function $f_1(x)=\frac{\log(x)}{1-x}\in(-\infty, -1]$, depicted in the left half of Fig.~\ref{fig:plots}. This function is concave because its second derivative is negative:
\begin{align}
f_1''(x) = - \frac{1-4x + 3 x^2 - 2x^2 \log(x)}{(1-x)^3 x^2} &\leq 0\\
\Leftrightarrow 1-4x + 3 x^2 - 2x^2 \log(x) &\geq 0\\
\Leftarrow 1-4x + 3 x^2 - 2x^2 2\frac{x-1}{x+1} &\geq 0\\
\Leftrightarrow (1-x)^3 & \geq 0,
\end{align}
where we used Eq.~\eqref{eq:logBounds}. Therefore, the ratio $ \frac{f_1(x)-f_1(r)}{r-x} $ is increasing with $x$ ($x<r$) and taking the limit $x\to r$ implies
\begin{equation}\label{eq:f1}
    \frac{f_1(x)-f_1(r)}{r-x} \leq - f_1'(r) = -\frac{1}{(1-r) r}-\frac{\log (r)}{(1-r)^2}.
\end{equation}

Second, we consider the function $f_2(y)=\frac{y \log(y)}{1-y}\in[-1,0]$, depicted in the right half of Fig.~\ref{fig:plots}. Using Eq.~\eqref{eq:logBounds}, we deduce
\begin{align}\nonumber
  f_2''(y) = \frac{-y^2+2 y \log (y)+1}{(1-y)^3 y} & \geq 0 \\ 
  \Leftrightarrow  -y^2+2 y \log (y)+1 & \geq 0  \nonumber \\
  \Leftrightarrow  \log(y)&\geq \frac{y^2-1}{2y} \label{eq:log1}
\end{align}
i.e.~the function $f_2(y)$ is concave.
Therefore, $\frac{f_2(y)-f_2(r)}{r-y}$ is decreasing with $r$ yielding a bound when $y<r$ at the limit $r\to y$ 
\begin{equation}
    \frac{f_2(y)-f_2(r)}{r-y} \leq -f'_2(y) =-\frac{1-y+\log (y)}{(1-y)^2}.
\end{equation}
Then, it follows that 
\begin{equation}
    y \frac{f_2(y) - f_2(r)}{r-y}\leq -y \frac{1-y+\log (y)}{(1-y)^2} =: f_3(y).
\end{equation}
Furthermore, using $\log(y)\leq y-1$ we obtain
\begin{align}\nonumber
  f'_3(y) = -\frac{2(1-y) + (1+y)\log(y)}{(1-y)^3} &\geq 0 \\ 
  \Leftrightarrow  -2(1-y) - (1+y)\log(y) &\geq 0  \nonumber \\
  \Leftrightarrow  \log(y)&\leq 2 \frac{y-1}{1+y} \label{eq:log1}
\end{align}
(see Ineq. \eqref{eq:logBounds} for the last inequality). Therefore $f_3$ is an increasing function and
\begin{equation}\label{eq:f2}
        y \frac{f_2(y)-f_2(r)}{r-y} \leq f_3(y)\leq f_3(r) = -r \frac{1-r+\log(r)}{(1-r)^2}.
\end{equation}

Plugging the two inequalities \eqref{eq:f1} and \eqref{eq:f2} into \eqref{eq:G op} gives
\begin{align}\label{eq:G proof}
    G & =r \frac{f_1(x)-f_1(r)}{r-x} + y \frac{f_2(y)-f_2(r)}{r-y} \\
    & \leq -r f_1'(r) +f_3(r) \\
    & = -\frac{1}{(1-r)}-\frac{r \log (r)}{(1-r)^2} - r \frac{1-r+\log(r)}{(1-r)^2} \\
    & = 
    \frac{r^2-2 r \log (r)-1}{(1-r)^2}\leq r^2-2 r \log (r)-1\leq 0,
\end{align}
where the last inequality has been derived using Ineq.~\eqref{eq:logBounds}.
\end{proof}

\subsection{Eve's information vs CHSH score.}\label{sec:CHSHbound}

Following the logic of the main text, our next step towards the security guarantee is to bound Eve's information, i.e.~to bound Eq.~\eqref{eq: Eves info} from the observed CHSH score $S$ between Alice and Bob. We thus want to find the worst case value of
\be\label{eq: min info}
\begin{split}
I(S) = \max_{\rho_{ABE},B_1}&\ H(E) - H(E|\Bkey)\\
           \text{s.t.}&\ \langle \text{CHSH}\rangle= \textrm{tr} \rho_{ABE} \big(A_1\otimes (B_1 \otimes B_2) + A_2\otimes(B_1 - B_2)\big)\otimes \id_E \geq S.
\end{split}
\ee
Given the block diagonal structure of the state $\rho_{ABE}$ and the measurement operators, both quantities
are averaged over all the subspaces
\be
\begin{split}
H(E) - H(E|\Bkey) &= 
\sum_{\lambda \mu}p_{\lambda \mu} \left( H({\bf L}^{\lambda\mu}) -  H(\hat{\rho}_{E|+1}^{\lambda\mu})\right)\\
\langle \text{CHSH}\rangle &=  \sum_{\lambda \mu} p_{\lambda \mu} \langle\text{CHSH}\rangle_{\lambda \mu}=\sum_{\lambda \mu} p_{\lambda \mu} \bra{\Psi^{\lambda \mu}}
\big(A_1^{\lambda}\otimes (B_1^{\mu} \otimes B_2^{\mu}) + A_2^{\lambda}\otimes(B_1^{\mu} - B_2^{\mu})\big)\otimes \id_E \ket{\Psi^{\lambda \mu}},
\end{split}
\ee
where each state $\ket{\Psi^{\lambda \mu}}$ is the purification of the Bell-diagonal state as given in Eq.~\eqref{eq:bell-diagonal} with the
weights ${\bf L}^{\lambda\mu}$, and Eve's conditional states $\hat{\rho}_{E|+1}^{\lambda\mu}$ have the form of Eq.~\eqref{eq:conditional state}. (We used the equality $H(\hat{\rho}_{E|+1}^{\lambda\mu}) = H(\hat{\rho}_{E|-1}^{\lambda\mu})$ and the fact that the probability for each outcome $\pm 1$ is the same.) It follows that any bound one derives for a restriction to the qubit subspaces
\be
H({\bf L}^{\lambda\mu}) - H(\hat{\rho}_{E|+1}^{\lambda\mu}) \leq I\left(\langle\text{CHSH}\rangle_{\lambda \mu}\right),
\ee
will hold for the overall state if the function $I$ is concave, since we would have
\be
H(E) - H(E|\Bkey) =  \sum_{\lambda \mu}p_{\lambda \mu} \left(H({\bf L}^{\lambda\mu}) - H(\hat{\rho}_{E|+1}^{\lambda\mu})\right) \leq \sum_{\lambda \mu}p_{\lambda \mu}  I\left(\langle\text{CHSH}\rangle_{\lambda \mu}\right) \leq I(\langle\text{CHSH}\rangle).
\ee
Furthermore, the bound is tight if the original inequality is. 

In this section we will derive such a bound. In other words, we will solve the problem of Eq.~\eqref{eq: min info} within a single subspace (from this point forward, we will again suppress explicit specification of the parameters $(\lambda,\mu)$) with a state $\rho_{A'B'E} = \pure{\Psi}$ as given in Eq.~\eqref{eq:bell-diagonal} and Pauli measurements in the the $X$-$Z$ plane (as discussed in Sec.~\ref{sec:BelldiagReduction}, this can be done without loss of generality):
\be
A_x = \cos(\alpha_x) Z + \sin(\alpha_x)X \quad B_y = \cos(\beta_x) Z + \sin(\beta_x)X,
\ee
with $\beta_1=\phi$. \\

Following the notation of \cite{horodecki1995violating} we get
\be
\rho_{AB}=\text{tr}_E \pure{\Psi} =\frac{1}{4} (II\quad XX\quad YY\quad ZZ)\cdot
\left(\begin{array}{c}
L_1+L_2+L_3+L_4\\
L_1-L_2-L_3+L_4\\
-L_1-L_2+L_3+L_4
\\L_1-L_2+L_3-L_4
\end{array}
\right).
\ee
Introducing the unit vectors ${\bf b}_y = \binom{\cos(\beta_y)}{\sin(\beta_y)}$ and ${\bf a}_x = \binom{\cos(\alpha_x)}{\sin(\alpha_x)}$ allows us to express the CHSH score as
\be
\begin{split}
\langle \text{CHSH} \rangle &=  
{\bf b}_1^T \Lambda ({\bf a}_1 + {\bf a}_2)+
{\bf b}_2^T \Lambda({\bf a}_1 - {\bf a}_2)
\\
\text{with}\quad
\Lambda &=\left(
\begin{array}{cc}
  L_1-L_2-L_3+L_4    & \\
     & L_1-L_2+L_3-L_4
\end{array}\right).
\end{split}
\ee
Since the entropies of Eve only depend on the measurement $B_1,$ we can directly maximize the expression above for the remaining measurement directions ${\bf a}_1, {\bf a}_2$ and ${\bf b}_2$. This is easy to do noticing that the expressions $2 \cos(\theta) {\bf c}_1  = ({\bf a}_1 + {\bf a}_2)$ and $2 \sin(\theta) {\bf c}_2  = ({\bf a}_1 - {\bf a}_2)$ define two orthogonal unit vectors ${\bf c}_1$ and ${\bf c}_2$. Hence,
\be\begin{split}
\max_{{\bf a}_1, {\bf a}_2, {\bf b}_2} \langle \text{CHSH} \rangle &= \max_{{\bf c}_j\cdot {\bf c}_i =\delta_{ij}, \theta} 2\cos (\theta) {\bf b}_1^T \Lambda {\bf c}_1  + 2\sin(\theta) ||\Lambda {\bf c}_2|| = 2 \max_{{\bf c}_j\cdot {\bf c}_i =\delta_{ij}, \theta}  \big(\cos(\theta)\quad \sin (\theta)\big)\cdot \binom{{\bf b}_1^T \Lambda {\bf c}_1}{||\Lambda {\bf c}_2||}
\\&= 2 \max_{{\bf c}_j\cdot {\bf c}_i =\delta_{ij}} \sqrt{\left({\bf b}_1^T \Lambda {\bf c}_1\right)^2 +||\Lambda {\bf c}_2||^2}.
\end{split}
\ee
For the first equality, we used the fact that the maximum $\max_{\bf b_2} {\bf b}_2^T \Lambda {\bf c}_2 = ||\Lambda {\bf c}_2||$ is attained for ${\bf b}_2 \parallel \Lambda {\bf c}_2.$ The same argument is used in the last equality for the maximization over the $\binom{\cos(\theta)}{\sin(\theta)}$ vector. The last quantity in the equalities above satisfies
\be
\sqrt{\left({\bf b}_1^T \Lambda {\bf c}_1\right)^2 +||\Lambda {\bf c}_2||^2} \leq \sqrt{||\Lambda {\bf c}_1||^2  +||\Lambda {\bf c}_2||^2} = \sqrt{\text{tr} \Lambda^2},
\ee
which can always be attained for any measurement $B_1$ by setting ${\bf c}_1 \parallel \Lambda {\bf b}_1$. Hence, we find that the maximal CHSH value of 
\be\label{eq:CHSHscore}
\langle \text{CHSH} \rangle = 2\sqrt{\text{tr} \Lambda^2}= 2\sqrt{2} \sqrt{\left(L_1-L_2\right){}^2+\left(L_3-L_4\right){}^2}
\ee
can be attained for any value of the setting of the key generating measurement ${\bf b}_1 =\binom{\cos(\phi)}{\sin(\phi)}$.\\

In addition, we know from Proposition 1 that Eve's conditional entropy is minimal when the value of $\cos(2\phi)$ is maximal, i.e.~$\phi =0$. This corresponds to the worst case maximizing $H(E) - H(E|\Bkey).$ Hence, we get
\be\begin{split}\label{eq: max I}
I(S) = \max_{\bf L}&\ H({\bf L}) - H(\hat{\rho}_{E|+1}|_{\phi=0})\\
\text{s.t.}&\ \langle \text{CHSH} \rangle = 2\sqrt{2}  \sqrt{\left(L_1-L_2\right){}^2+\left(L_3-L_4\right){}^2} \geq S,
\end{split}
\ee
with
\be
\begin{split}
H(\hat{\rho}_{E|+1}|_{\phi=0}) &= H\left(
\left(
\begin{array}{cccc}
 L_1 & 0 & \sqrt{L_1} \sqrt{L_3} \sqrt{q} & 0 \\
 0 & L_2  & 0 & -\sqrt{L_2} \sqrt{L_4} \sqrt{q} \\
 \sqrt{L_1} \sqrt{L_3} \sqrt{q} & 0 & L_3 & 0 \\
 0 & -\sqrt{L_2} \sqrt{L_4} \sqrt{q} & 0 & L_4 \\
\end{array}
\right) \right)=H ({\bf p})
\end{split}
\ee
for
\be
\begin{split}
{\bf p}&=\left(\begin{array}{c}
\frac{1}{2} \left(L_1+L_3 +\sqrt{4 L_1 L_3 q +(L_1-L_3)^2}\right)\\
\frac{1}{2} \left(L_1+L_3 -\sqrt{4 L_1 L_3 q +(L_1-L_3)^2}\right)\\
\frac{1}{2} \left(L_2+L_4 +\sqrt{4 L_2 L_4 q +(L_2-L_4)^2}\right)\\
\frac{1}{2} \left(L_2+L_4 -\sqrt{4 L_2 L_4 q +(L_2-L_4)^2}\right)
\end{array}
\right).
\end{split}
\ee
Given the form of ${\bf p},$ it is convenient to do the following change of variables
\be
\label{eq: P x y}
L_1 = P x, \quad L_3 = P (1-x), \quad L_2 = (1-P)y, \quad \text{and} \quad L_4 = (1-P)(1-y),
\ee
with $x,y,P\in [0,1]$. The partial ordering of the $\bf L$ coefficients implies 
\be\label{eq: ordering}
\begin{cases}
P x \geq (1-P)y\\
P(1-x) \geq (1-P)(1-y)
\end{cases}
\implies
(1-P)y \leq P x \leq (1-P)y + 2P-1
\ee
which requires $P\geq \frac{1}{2}.$ 
The maximization given in Eq.~\eqref{eq: max I} can be rewritten in terms of the new variables as
\begin{align}
I(S) &= \max_{P,x,y} P h_q(x) + (1-P) h_q (y) \\
\text{s.t.} \quad \langle \text{CHSH} \rangle &= 2 \sqrt{2} \sqrt{(P (x+y-2)-y+1)^2+(y-P (x+y))^2} \geq S,
\end{align}
with
\begin{equation}
\begin{split}\label{eq: def hq}
 h_q(z) & = h(z) - h(n_q(z)) \\
 h(z) &= -z \log(z) - (1-z) \log(1-z)  \\ 
 n_q(z) &=\frac{1 + \sqrt{1 - 4\, (1-q)\, z (1-z)}}{2}.\\
  \end{split}
\end{equation}
Here $h(z)$ is the binary entropy function and $\log(z)$ is the logarithm in base 2.

We start by listing some properties of the $h_q$ function that we will prove at the end of the section. First, one easily verifies that it is symmetric around $z=1/2$: $h_q(z)=h_q(1-z).$ Its values at the boundaries are $h_q(0)=h_q(1)=0$. In addition, its derivative $h'_q(z)$ is strictly decreasing with $z$,
that is to say $h''_q(z)<0$. In particular, this implies that $h_q(z)$ is a concave function that increases monotonically for $z\leq 1/2$.

Let us now fix the value of $P$ and consider the curve in the $(x,y)$-plane that corresponds to a constant CHSH score
\be
0 = d \langle \text{CHSH} \rangle =  \frac{d\langle \text{CHSH} \rangle}{dx} dx + \frac{d\langle \text{CHSH} \rangle}{dy} dy  \propto P dx - (1-P) dy.
\ee
The CHSH score remains constant if $P dx = (1-P) dy.$ Hence, we write $dx = (1-P) d\mu$ and $dy = P d\mu$ which parametrize the one-dimensional curve of constant $\langle \text{CHSH} \rangle$ and $P$ in the $x-y$ space with a single parameter $\mu$.  The full curve is given by
\be
(x,y)_\mu = \Big((1-P)\mu + x_0, P \mu \Big) \quad \text{for} \quad \mu \in [0, \frac{1}{P}]
\ee
and for any value $0\leq x_0\leq \frac{2P-1}{P}.$ (Note that the bounds on $\mu$ and $x_0$ come from the constraints on $\bf L$.)
We are interested in maximizing Eve's information $H({\bf L}) - H({\bf p})$ along the curve. To do so we express the infinitesimal variation of this quantity along the curve, we have
\be
d(H({\bf L}) - H({\bf p}) ) = P h_q'(x) dx + (1-P) h_q'(y) = (  h_q'(x) +  h_q'(y)) d\mu.
\ee
Thus, Eve's information has a local extremum for $h_q'(x) = - h_q'(y).$ Given the symmetry of $h_q$ and the fact that  $h'_q$ is strictly decreasing, this equality can only be fulfilled if $x$ and $y$ are symmetric around $\frac{1}{2}$, that is $x+y=1$. Furthermore, the unique extremum is actually a local maximum as the second derivative is negative
\be
\frac{d^2(H({\bf L}) - H({\bf p}) )}{d\mu^2}= \frac{1}{P} h_q''(x)+\frac{1}{1-P} h_q''(y) < 0,
\ee
since $h''_q(z)<0.$ 

By continuity it follows that $x+y =1$ is also the global maximum of $H({\bf L}) - H({\bf p})$ along the $\mu$-curve. Hence without loss of generality we can set $y=1-x$ in our constrained maximization. The functions of interest then become 
\be
    \begin{split}
        I(S)&=\max_{x,P} h_q(x)\\
        \text{s.t.}  \quad \langle \text{CHSH} \rangle &= 2 \sqrt{2-4 (1-P) P-4 (1-x) x} \geq S,
    \end{split}
    \ee
where we used $h_q(x)=h_q(1-x)$.

We want to find the highest possible value of $I$ compatible with the observed value of CHSH. Since $h_q(x)$ is independent of $P,$ we can take the value of $P$ leading to the lowest constraint on $x,$ i.e. $P=1$ (recall that $P\geq\frac{1}{2}$) which maximizes the CHSH score to $\langle \text{CHSH} \rangle=2 \sqrt{2} \sqrt{1-2 x(1-x)}$. This implies 
    \be\label{eq: IEq}
    H({\bf L}) - H({\bf p}) \leq I_E(S;q)= h_q\left(\frac{1}{2}\Big(1+\sqrt{2\left(\frac{S}{2\sqrt{2}}\right)^2-1}\Big)\right).\\
\ee
A sketch of the function $I_E(S;q)$ for different choices of the parameter $q$ is given in Fig.~\ref{fig:hq}. One notices that our whole construction is tight, i.e.~it identifies a strategy for Eve that attains the bound. 

For convenience, we write below the explicit formula for the right hand side of the inequality \eqref{eq: IEq} in terms of the original parameter $p$. Using $q=(1-2p)^2$ and Eq.~\eqref{eq: def hq} we get
\be \label{eq: final}
 \boxed{H({\bf L}) - H({\bf p})\leq I_p(S) = I_E(S;(1-2p)^2) =  h\left( \frac{1+\sqrt{(S/2)^2-1}}{2}\right)  - h\left( \frac{1+\sqrt{1-(1-p) p \left(8-S^2\right)}}{2}\right)} .
\ee

\begin{figure}[h!]
	\centering
\includegraphics[width=0.6\textwidth]{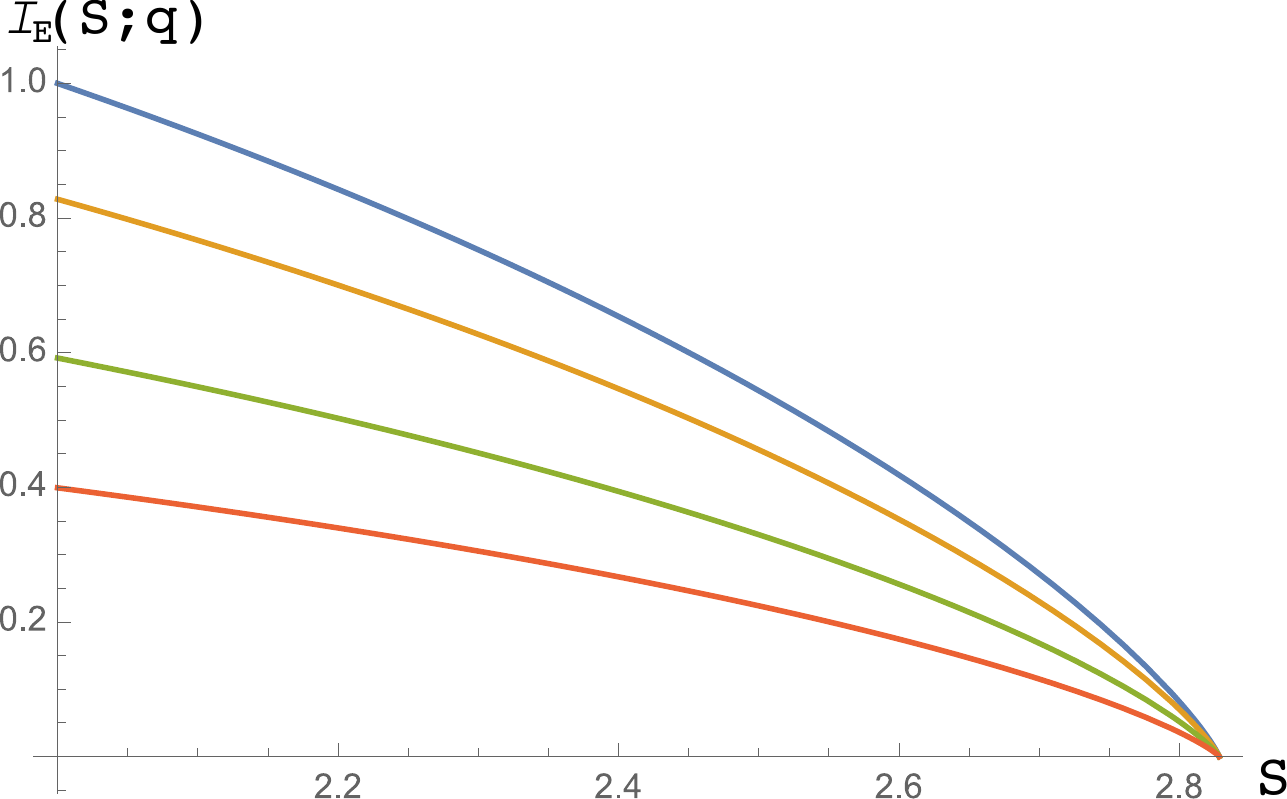}
 \caption{The upper bound $I_E(S;q)$ on Eve's information versus the CHSH score $S$ for noisy-preprocessing values $q=(1-2p)^2=1,0.9, 0.7$ and $ 0.5$ (from top to bottom).}
	\label{fig:hq}
\end{figure}

{\bf Properties of the $h_q$ function.} First, note that $h_{q=0}(z)=h(z)-h(z)=0$, which also follows from the fact that for $q=0$ Bob replaces his outcome by a private random bit. The symmetry of the function $h_q(z)=h_q(1-z)$ trivially follows from the symmetry of the binary entropy $h(z)= h(1-z)$ and of the function $n_q(z)=n_q(1-z)$. Its values at the boundaries are given by
\be
h_q(0)=h_q(1)= h(0) - h(0)=0.
\ee
Similarly, one can obtain the derivative at the boundaries. For $q>0$ one has
\be\begin{split}
h'_q(0) &= - h_q'(1) = \lim_{z\to0_+} h'(z) - h'\big(n_q(z)\big) n_q'(z) = \lim_{z\to0_+} -\log(z) + \log\big((1-q)z\big)(1-q)\\
&= \lim_{z\to0_+} -\log(z) + \log(z^{1-q}) +(1-q)\log(1-q)=  \lim_{z\to0_+} -\log(z^q) +(1-q)\log(1-q) = \infty.
\end{split}
\ee
Finally, for the second derivative we have
\be
\begin{split}
h''_q(z) &=h''(z) -h''(n_q(z)) (n_q'(z))^2 - h'(n_q(z))n_q''(z)   \\
&= -\frac{1}{z(1-z)}- \frac{1}{(1-q) z (1-z)}\frac{(1-q)^2 (1-2 z)^2}{1-4 (1-q) z (1-z)}+\frac{ \tanh ^{-1}\left(\sqrt{1-4 (1-q) z (1-z)}\right) 4 (1-q) q}{(1-4 (q+1) z (1-z))^{3/2}}\\
&=-\frac{q}{(1-z) z (1-4 (1-q) z (1-z))} + \frac{ \tanh ^{-1}\left(\sqrt{1-4 (1-q) z (1-z)}\right) 4 (1-q) q}{(1-4 (q+1) z (1-z))^{3/2}}.
\end{split}
\ee
Introducing $0\leq R(q,z) = 4 (1-q) z (1-z) < 1$ (here $R<1$ follows from $q>0$) the expression simplifies to
\be
h''_q(z)=4 q(1-q) \left(-\frac{1}{R (1-R)}+ \frac{\tanh^{-1}(\sqrt{1-R})}{(1-R)^{3/2}} \right)=
\frac{4 q(1-q)}{(1-R)^{3/2}} \left(-\frac{\sqrt{1-R}}{R}+ \tanh^{-1}\left(\sqrt{1-R}\right)\right).
\ee
The second derivative $h''_q(z)$ is thus negative if  $\frac{\sqrt{1-R}}{R}\geq \tanh^{-1}\left(\sqrt{1-R}\right)$. Replacing $\zeta=\sqrt{1-R}$ and using $\tanh^{-1}(\zeta) = \frac{1}{2}\log\left(\frac{1+\zeta}{1-\zeta}\right)$ we rewrite this inequality as
\be
\frac{2 \zeta}{1-\zeta^2}\geq \log\left(\frac{1+\zeta}{1-\zeta}\right).
\ee
This is easy to prove by noting that the two terms are equal at $\zeta=0$:
\be
\frac{2 \zeta}{1-\zeta^2}|_{\zeta=0} = \log\left(\frac{1+\zeta}{1-\zeta}\right)|_{\zeta=0}=0,
\ee
while the left hand side term grows faster
\be
\frac{d}{d\zeta} \left(\frac{2 \zeta}{1-\zeta^2}\right) = 2 \frac{1+\zeta^2}{(1-\zeta^2)^2}\geq 2 \frac{1}{1-\zeta^2} =\frac{d}{d\zeta} \log\left(\frac{1+\zeta}{1-\zeta}\right) , 
\ee
with equality only for $\zeta=0$. Hence, we have shown that $h''(z)\leq 0$, with $h''(z)= 0$ only possible for $\zeta=0$. However, $\zeta=0$ corresponds to $R=1$ and is impossible for $q>0$. We have thus proved that
\be
h''_q(z)<0.
\ee


\section{Statistics obtained with a SPDC source and photon counting}

\subsection{Modelization of the source and detection devices}
\paragraph{Source --} We consider a source generating entangled photon pairs according to 
\begin{equation}
\nonumber
|\psi\rangle = (1-T_g^2)^{N/2} (1-T_{\bar{g}}^2)^{N/2} \Pi_{k=1}^N e^{T_g a_k^\dagger b_{k,\perp}^\dagger - T_{\bar{g}} a_{k,\perp}^\dagger b_k^\dagger} |\underbar{0}\rangle.
\end{equation}
This describes a bi-partite state where the first party, Alice, measures the modes labelled by the bosonic operators $a_k$ and $a_{k,\perp}$ and the second party, Bob, measures $b_k$ and $b_{k,\perp}.$ In the mono-mode case where $k$ takes a single value, entanglement appears clearly between the modes labelled with and without the index $\perp.$ $k$ is here used to account for multiple emission modes. $T_g=\tanh g$ and similarly $T_{\bar g}=\tanh \bar g$ where $g$ and $\bar{g}$ indicate squeezing parameters related to $a_k^\dagger b_{k,\perp}^\dagger$ and $a_{k,\perp}^\dagger b_k^\dagger$ respectively, can be changed to tune the amount of entanglement per pair. Below, we first consider the mono-mode case (the index $k$ is omitted), compute the statistics and show how it can be extended to the multi-mode case at the end.\\

\paragraph{Detector --}  We consider photon detectors which do not resolve the photon number. The event "no-click" is modelled by 
\begin{equation}
\label{detector}
\hat D_a(\eta_a) = (1-p_{\text{dc}}^a)(1-\eta_a)^{a^\dag a}
\end{equation}
where $a$ specifies the mode which is detected. $\eta_a$ is the detection efficiency (overall detection efficiency including all the lost from the source to the detector) and $p_{\text{dc}}^a$ the dark count probability.\\

\paragraph{Setting choice --} Rotations are possibly performed before the detections so that the photons can be measured in several basis. The detected modes are called $A$ and $A_\bot$ for Alice and are related to the emission modes by 
\begin{eqnarray}
&& a = C_\alpha A + S_\alpha e^{i \phi_\alpha} A_\bot\\
&& a_\bot = S_\alpha e^{-i \phi_\alpha} A - C_\alpha A_\bot
\end{eqnarray}
with $C_\alpha = \cos(\alpha)$ and $S_\alpha=  \, \sin(\alpha)$ and similarly for Bob. \\

\paragraph{Summary}The state which is effectively measured can be written as 
\begin{equation}
\label{rotated_state}
|\psi_{\alpha, \phi_\alpha, \beta, \phi_\beta}\rangle=\left(1-T_g^2\right)^{\frac{1}{2}} \left(1-T_{\bar g}^2\right)^{\frac{1}{2}} e^{(A^\dag, A_{\bot}^\dag) M \left(\begin{matrix}B^\dag \\ B_{\bot}^\dag\end{matrix}\right)} |\underbar{0}\rangle
\end{equation}
with
\begin{equation}
M=\left( \begin{matrix}
T_g C_\alpha S_\beta e^{-i \phi_\beta} - T_{\bar g} S_\alpha e^{-i\phi_\alpha} C_\beta & 
-T_g C_\alpha C_\beta - T_{\bar g} S_\alpha e^{-i \phi_\alpha} S_\beta e^{i \phi_\beta} \\ 
T_g S_\alpha e^{i \phi_\alpha} S_\beta e^{-i \phi_\beta } + T_{\bar g} C_\alpha C_\beta & 
-T_g S_\alpha e^{i \phi_\alpha} C_\beta + T_{\bar g} C_\alpha S_\beta e^{i \phi_\beta} 
\end{matrix}\right).
\end{equation}
It is measured according to a model where the POVM element associated to a no-click in detector $A$ is given by
\begin{equation}
\label{detector}
\hat D_A(\eta_A) = (1-p_{\text{dc}}^A)(1-\eta_A)^{A^\dag A}
\end{equation}
and similarly for $A_\bot,$ $B$ and $B_\bot.$ 

\subsection{Derivation of the full statistics}
The no-click events can all be computed in a similar way. Consider for example the probability $p(\text{nc}_A)$ of having a no-click in A
\begin{equation}
p(\text{nc}_A)=(1-p_{\text{dc}}^A)  \text{Tr}\left(R_A^{A^\dag A/2}|\psi_{\alpha, \phi_\alpha, \beta, \phi_\beta}\rangle \langle \psi_{\alpha, \phi_\alpha, \beta, \phi_\beta}| R_A^{A^\dag A/2}\right)
\end{equation}
with $R_A=1-\eta_A.$ From $x^{a\dag a} f(a^\dag) = f(x a^\dag) x^{a^\dag a},$ we have 
\begin{equation}
R_A^{A^\dag A/2}|\psi_{\alpha, \phi_\alpha, \beta, \phi_\beta}\rangle = \left(1-T_g^2\right)^{\frac{1}{2}} \left(1-T_{\bar g}^2\right)^{\frac{1}{2}} e^{(A^\dag, A_{\bot}^\dag) M_{R_A} \left(\begin{matrix}B^\dag \\ B_{\bot}^\dag\end{matrix}\right)} |\underbar{0}\rangle
\end{equation}
with
\begin{equation}
M_{R_A}=\left( \begin{matrix}
R_A^{1/2} (T_g C_\alpha S_\beta e^{-i \phi_\beta} - T_{\bar g} S_\alpha e^{-i\phi_\alpha} C_\beta) & 
R_A^{1/2} (-T_g C_\alpha C_\beta - T_{\bar g} S_\alpha e^{-i \phi_\alpha} S_\beta e^{i \phi_\beta}) \\ 
T_g S_\alpha e^{i \phi_\alpha} S_\beta e^{-i \phi_\beta } + T_{\bar g} C_\alpha C_\beta & 
-T_g S_\alpha e^{i \phi_\alpha} C_\beta + T_{\bar g} C_\alpha S_\beta e^{i \phi_\beta}
\end{matrix}\right).
\end{equation}
Let $\lambda_1^{R_A}$ and $\lambda_2^{R_A}$ be the singular values of $M_{R_A}.$ We find
\begin{equation}
\text{Tr}\left(R_A^{A^\dag A/2}|\psi_{\alpha, \phi_\alpha, \beta, \phi_\beta}\rangle \langle \psi_{\alpha, \phi_\alpha, \beta, \phi_\beta}| R_A^{A^\dag A/2}\right)=\frac{\left(1-T_g^2\right) \left(1-T_{\bar g}^2\right)}{(1-(\lambda_1^{R_A})^2)(1-(\lambda_2^{R_A})^2)}.
\end{equation} 
We deduce 
\begin{equation}
p(\text{nc}_A)=(1-p_{\text{dc}}^A)\frac{\left(1-T_g^2\right) \left(1-T_{\bar g}^2\right)}{(1-(\lambda_1^{R_A})^2)(1-(\lambda_2^{R_A})^2)}.
\end{equation}
In the multimode case where the source emits $N$ independent modes, we simply have 
\begin{equation}
p(\text{nc}_A)=(1-p_{\text{dc}}^A) \left(\frac{\left(1-T_g^2\right) \left(1-T_{\bar g}^2\right)}{(1-(\lambda_1^{R_A})^2)(1-(\lambda_2^{R_A})^2)}\right)^N.
\end{equation}
The full statistics can be derived in a similar way. In particular, by introducing
\begin{equation}
M_{R_A,R_{A_\bot},R_B,R_{B_\bot}}=\left( \begin{matrix}
R_A^{1/2} R_B^{1/2} (T_g C_\alpha S_\beta e^{-i \phi_\beta} - T_{\bar g} S_\alpha e^{-i\phi_\alpha} C_\beta) & 
R_A^{1/2} R_{B_{\bot}}^{1/2} (-T_g C_\alpha C_\beta - T_{\bar g} S_\alpha e^{-i \phi_\alpha} S_\beta e^{i \phi_\beta}) \\ 
R_{A_{\bot}}^{1/2} R_B^{1/2} (T_g S_\alpha e^{i \phi_\alpha} S_\beta e^{-i \phi_\beta } + T_{\bar g} C_\alpha C_\beta) & 
R_{A_{\bot}}^{1/2} R_{B_{\bot}}^{1/2} (-T_g S_\alpha e^{i \phi_\alpha} C_\beta + T_{\bar g} C_\alpha S_\beta e^{i \phi_\beta})
\end{matrix}\right)
\end{equation}
and the corresponding singular values $\lambda_1^{R_A,R_{A_\bot},R_B,R_{B_\bot}}$ and $\lambda_2^{R_A,R_{A_\bot},R_B,R_{B_\bot}}.$ We find 
\begin{eqnarray}
\nonumber
&&p(\text{nc}_A)=(1-p_{\text{dc}}^A) \left(\frac{\left(1-T_g^2\right) \left(1-T_{\bar g}^2\right)}{(1-(\lambda_1^{R_A,1,1,1})^2)(1-(\lambda_2^{R_A,1,1,1})^2)}\right)^N\\
\nonumber
&&p(\text{nc}_{A_{\bot}})=(1-p_{\text{dc}}^{A_{\bot}}) \left(\frac{\left(1-T_g^2\right) \left(1-T_{\bar g}^2\right)}{(1-(\lambda_1^{1,R_{A_{\bot}},1,1})^2)(1-(\lambda_2^{1,R_{A_{\bot}},1,1})^2)}\right)^N\\
\nonumber
&&p(\text{nc}_B)=(1-p_{\text{dc}}^B) \left(\frac{\left(1-T_g^2\right) \left(1-T_{\bar g}^2\right)}{(1-(\lambda_1^{1,1,R_B,1})^2)(1-(\lambda_2^{1,1,R_B,1})^2)}\right)^N\\
\nonumber
&&p(\text{nc}_{B_{\bot}})=(1-p_{\text{dc}}^{B_{\bot}}) \left(\frac{\left(1-T_g^2\right) \left(1-T_{\bar g}^2\right)}{(1-(\lambda_1^{1,1,1,R_{B_{\bot}}})^2)(1-(\lambda_2^{1,1,1,R_{B_{\bot}}})^2)}\right)^N.
\end{eqnarray}
The same line of thought applies to the joint probability 
\begin{eqnarray}
\nonumber
&& p(\text{nc}_A \& \text{nc}_B)=(1-p_{\text{dc}}^A)(1-p_{\text{dc}}^B) \left(\frac{\left(1-T_g^2\right) \left(1-T_{\bar g}^2\right)}{(1-(\lambda_1^{R_A,1,R_B,1})^2)(1-(\lambda_2^{R_A,1,R_B,1})^2)}\right)^N\\
\nonumber
&& ...
\end{eqnarray}
that is, the joint probability 
\begin{eqnarray}
\nonumber
&& p(\text{nc}_i \& \text{nc}_j)=(1-p_{\text{dc}}^i)(1-p_{\text{dc}}^j) \left(\frac{\left(1-T_g^2\right) \left(1-T_{\bar g}^2\right)}{(1-(\lambda_1^{r^A_{ij},r^{A_{\bot}}_{ij},r^B_{ij},r^{B_{\bot}}_{ij}})^2)(1-(\lambda_2^{r^A_{ij},r^{A_{\bot}}_{ij},r^B_{ij},r^{B_{\bot}}_{ij}})^2)
}\right)^N
\end{eqnarray}
$i,j$ standing for $A,A_{\bot},B$ or $B_{\bot}$ and
\begin{align}
r^\ell_{ij} & = R_\ell \quad \text{if} \quad i \, \text{or} \, j \, \text{equals} \, \ell  \\
&=1 \quad \text{otherwise}.
\end{align}
Similarly
\begin{eqnarray}
\nonumber
&& p(\text{nc}_i \& \text{nc}_j \& \text{nc}_k)=(1-p_{\text{dc}}^i)(1-p_{\text{dc}}^j)(1-p_{\text{dc}}^k) \left(\frac{\left(1-T_g^2\right) \left(1-T_{\bar g}^2\right)}{(1-(\lambda_1^{r^A_{ijk},r^{A_{\bot}}_{ijk},r^B_{ijk},r^{B_{\bot}}_{ijk}})^2)(1-(\lambda_2^{r^A_{ijk},r^{A_{\bot}}_{ijk},r^B_{ijk},r^{B_{\bot}}_{ijk}})^2)
}\right)^N
\end{eqnarray}
$i,j,k$ standing for $A,A_{\bot},B$ or $B_{\bot}$ and
\begin{align}
r^\ell_{ijk} & = R_\ell \quad \text{if} \quad i \, \text{or} \, j \, \text{or} \, k \, \text{equals} \, \ell  \\
&=1 \quad \text{otherwise}.
\end{align}
Finally 
\begin{equation}
\nonumber
p(\text{nc}_A \& \text{nc}_{A_\bot} \& \text{nc}_B \& \text{nc}_{B_\bot}) = (1-p_{\text{dc}}^A)(1-p_{\text{dc}}^{A_\bot})(1-p_{\text{dc}}^B)(1-p_{\text{dc}}^{B_\bot}) \left(\frac{\left(1-T_g^2\right) \left(1-T_{\bar g}^2\right)}{(1-(\lambda_1^{R_A,R_{A_\bot},R_B,R_{B_\bot}})^2)(1-(\lambda_2^{R_A,R_{A_\bot},R_B,R_{B_\bot}})^2)}\right)^N\\
\end{equation}

\subsection{Derivation of the CHSH value}
To compute the CHSH value, the parties need to bin their results, i.e.~they have to decide how to group their four possible events to get outcome $\pm 1$. Alice (Bob) chooses to assign the value $-1$ when the detectors A (B) clicks whereas $A_\perp$ ($B_\perp$) does not and $+1$ to the three remaining events. Hence, $p(-1,-1|x,y) = \text{Tr}\big((\mathbf{1}-\hat D_A(\eta_A)) \hat D_{A_\perp}(\eta_{A,\perp}) ( \mathbf{1}-\hat D_B(\eta_B)) \hat D_{B_\perp}(\eta_{B,\perp}) |\psi_{\alpha, \phi_\alpha, \beta, \phi_\beta}\rangle \langle \psi_{\alpha, \phi_\alpha, \beta, \phi_\beta}|\big)$, which can be related to the no-detection probabilities derived previously via $p(nc_{A_\perp} \& nc_{B_\perp}) - p(nc_{A} \& nc_{A_\perp} \& nc_{B_\perp}) -p(nc_{A_\perp} \& nc_{B} \& nc_{B_{\perp}})+ p(nc_{A} \& nc_{A_\perp} \& nc_{B} \& nc_{B_\perp}).$ Processing $p(+1,-1|x,y),$ $p(-1,+1|x,y)$ and $p(+1,+1|x,y)$ in a similar way gives an expression for the CHSH value as a function of the squeezing parameters $g$ - $\bar g$, the number of modes $N,$ the measurement settings, the detection efficiencies and dark count probabilities.

\subsection{Comparison with an ideal source}
As discussed in the main text, we computed the requirement on the detection efficiency using various security proofs in a setup using polarisation entanglement produced by means of a SPDC source. It is instructive to compare this requirement when using an ideal source producing two-qubit entangled states. \\

For the protocol proposed in Ref.~\cite{Pironio09}, the requirement on the global detection efficiency is $92.7 \%$ when using a SPDC source, which is $3.4\%$ higher than if one used a perfect two-qubit source. Conversely, when using the security proof that is proposed in this manuscript that employs noisy pre-processing with error correction that uses the four-valued outcome $\rv{A}_0$ instead of its binarization as proposed in Ref.~\cite{Ma12}, the global detection efficiency when a SPDC source is used is $83.2\%$, which is only $0.4 \%$ higher compared to the case of a perfect qubit source. We conclude that developing a photon pair source producing exactly two-qubit entangled states is not a priority when considering the first photonic implementation of device-independent quantum key distribution.

\bibliography{references}

\begin{thebibliography}{42}
\expandafter\ifx\csname natexlab\endcsname\relax\def\natexlab#1{#1}\fi
\expandafter\ifx\csname bibnamefont\endcsname\relax
  \def\bibnamefont#1{#1}\fi
\expandafter\ifx\csname bibfnamefont\endcsname\relax
  \def\bibfnamefont#1{#1}\fi
\expandafter\ifx\csname citenamefont\endcsname\relax
  \def\citenamefont#1{#1}\fi
\expandafter\ifx\csname url\endcsname\relax
  \def\url#1{\texttt{#1}}\fi
\expandafter\ifx\csname urlprefix\endcsname\relax\def\urlprefix{URL }\fi
\providecommand{\bibinfo}[2]{#2}
\providecommand{\eprint}[2][]{\url{#2}}

\bibitem[{\citenamefont{Bennett and Brassard}(1984)}]{Bennett84}
\bibinfo{author}{\bibfnamefont{C.~H.} \bibnamefont{Bennett}} \bibnamefont{and}
  \bibinfo{author}{\bibfnamefont{G.}~\bibnamefont{Brassard}},
  \bibinfo{journal}{Theor. Comput. Sci.} \textbf{\bibinfo{volume}{560}},
  \bibinfo{pages}{7} (\bibinfo{year}{1984}).

\bibitem[{\citenamefont{Ekert}(1991)}]{Ekert91}
\bibinfo{author}{\bibfnamefont{A.~K.} \bibnamefont{Ekert}},
  \bibinfo{journal}{Physical review letters} \textbf{\bibinfo{volume}{67}},
  \bibinfo{pages}{661} (\bibinfo{year}{1991}).

\bibitem[{\citenamefont{Gisin et~al.}(2002)\citenamefont{Gisin, Ribordy,
  Tittel, and Zbinden}}]{Gisin02}
\bibinfo{author}{\bibfnamefont{N.}~\bibnamefont{Gisin}},
  \bibinfo{author}{\bibfnamefont{G.}~\bibnamefont{Ribordy}},
  \bibinfo{author}{\bibfnamefont{W.}~\bibnamefont{Tittel}}, \bibnamefont{and}
  \bibinfo{author}{\bibfnamefont{H.}~\bibnamefont{Zbinden}},
  \bibinfo{journal}{Rev. Mod. Phys.} \textbf{\bibinfo{volume}{74}},
  \bibinfo{pages}{145} (\bibinfo{year}{2002}),
  \urlprefix\url{https://link.aps.org/doi/10.1103/RevModPhys.74.145}.

\bibitem[{\citenamefont{Scarani et~al.}(2009)\citenamefont{Scarani,
  Bechmann-Pasquinucci, Cerf, Du\ifmmode~\check{s}\else \v{s}\fi{}ek,
  L\"utkenhaus, and Peev}}]{Scarani09}
\bibinfo{author}{\bibfnamefont{V.}~\bibnamefont{Scarani}},
  \bibinfo{author}{\bibfnamefont{H.}~\bibnamefont{Bechmann-Pasquinucci}},
  \bibinfo{author}{\bibfnamefont{N.~J.} \bibnamefont{Cerf}},
  \bibinfo{author}{\bibfnamefont{M.}~\bibnamefont{Du\ifmmode~\check{s}\else
  \v{s}\fi{}ek}},
  \bibinfo{author}{\bibfnamefont{N.}~\bibnamefont{L\"utkenhaus}},
  \bibnamefont{and} \bibinfo{author}{\bibfnamefont{M.}~\bibnamefont{Peev}},
  \bibinfo{journal}{Rev. Mod. Phys.} \textbf{\bibinfo{volume}{81}},
  \bibinfo{pages}{1301} (\bibinfo{year}{2009}),
  \urlprefix\url{https://link.aps.org/doi/10.1103/RevModPhys.81.1301}.

\bibitem[{\citenamefont{Lo et~al.}(2014)\citenamefont{Lo, Curty, and
  Tamaki}}]{Lo14}
\bibinfo{author}{\bibfnamefont{H.-K.} \bibnamefont{Lo}},
  \bibinfo{author}{\bibfnamefont{M.}~\bibnamefont{Curty}}, \bibnamefont{and}
  \bibinfo{author}{\bibfnamefont{K.}~\bibnamefont{Tamaki}},
  \bibinfo{journal}{Nature Photonics} \textbf{\bibinfo{volume}{8}},
  \bibinfo{pages}{595–604} (\bibinfo{year}{2014}),
  \urlprefix\url{https://doi.org/10.1038/nphoton.2014.149}.

\bibitem[{\citenamefont{Renner}(2008)}]{Renner08}
\bibinfo{author}{\bibfnamefont{R.}~\bibnamefont{Renner}},
  \bibinfo{journal}{International Journal of Quantum Information}
  \textbf{\bibinfo{volume}{6}}, \bibinfo{pages}{1} (\bibinfo{year}{2008}).

\bibitem[{\citenamefont{Zhao et~al.}(2008)\citenamefont{Zhao, Fung, Qi, Chen,
  and Lo}}]{Zhao08}
\bibinfo{author}{\bibfnamefont{Y.}~\bibnamefont{Zhao}},
  \bibinfo{author}{\bibfnamefont{C.-H.~F.} \bibnamefont{Fung}},
  \bibinfo{author}{\bibfnamefont{B.}~\bibnamefont{Qi}},
  \bibinfo{author}{\bibfnamefont{C.}~\bibnamefont{Chen}}, \bibnamefont{and}
  \bibinfo{author}{\bibfnamefont{H.-K.} \bibnamefont{Lo}},
  \bibinfo{journal}{Phys. Rev. A} \textbf{\bibinfo{volume}{78}},
  \bibinfo{pages}{042333} (\bibinfo{year}{2008}),
  \urlprefix\url{https://link.aps.org/doi/10.1103/PhysRevA.78.042333}.

\bibitem[{\citenamefont{Lydersen et~al.}(2010)\citenamefont{Lydersen, Wiechers,
  Wittmann, Elser, Skaar, and Makarov}}]{Lydersen10}
\bibinfo{author}{\bibfnamefont{L.}~\bibnamefont{Lydersen}},
  \bibinfo{author}{\bibfnamefont{C.}~\bibnamefont{Wiechers}},
  \bibinfo{author}{\bibfnamefont{C.}~\bibnamefont{Wittmann}},
  \bibinfo{author}{\bibfnamefont{D.}~\bibnamefont{Elser}},
  \bibinfo{author}{\bibfnamefont{J.}~\bibnamefont{Skaar}}, \bibnamefont{and}
  \bibinfo{author}{\bibfnamefont{V.}~\bibnamefont{Makarov}},
  \bibinfo{journal}{Nature Photonics} \textbf{\bibinfo{volume}{4}},
  \bibinfo{pages}{686} (\bibinfo{year}{2010}),
  \urlprefix\url{https://doi.org/10.1038/nphoton.2010.214}.

\bibitem[{\citenamefont{Weier et~al.}(2011)\citenamefont{Weier, Krauss, Rau,
  F{\"u}rst, Nauerth, and Weinfurter}}]{Weier11}
\bibinfo{author}{\bibfnamefont{H.}~\bibnamefont{Weier}},
  \bibinfo{author}{\bibfnamefont{H.}~\bibnamefont{Krauss}},
  \bibinfo{author}{\bibfnamefont{M.}~\bibnamefont{Rau}},
  \bibinfo{author}{\bibfnamefont{M.}~\bibnamefont{F{\"u}rst}},
  \bibinfo{author}{\bibfnamefont{S.}~\bibnamefont{Nauerth}}, \bibnamefont{and}
  \bibinfo{author}{\bibfnamefont{H.}~\bibnamefont{Weinfurter}},
  \bibinfo{journal}{New Journal of Physics} \textbf{\bibinfo{volume}{13}},
  \bibinfo{pages}{073024} (\bibinfo{year}{2011}).

\bibitem[{\citenamefont{Garcia-Escartin
  et~al.}(2019)\citenamefont{Garcia-Escartin, Sajeed, and Makarov}}]{Garcia19}
\bibinfo{author}{\bibfnamefont{J.~C.} \bibnamefont{Garcia-Escartin}},
  \bibinfo{author}{\bibfnamefont{S.}~\bibnamefont{Sajeed}}, \bibnamefont{and}
  \bibinfo{author}{\bibfnamefont{V.}~\bibnamefont{Makarov}},
  \bibinfo{journal}{arXiv preprint arXiv:1910.08152}  (\bibinfo{year}{2019}).

\bibitem[{\citenamefont{Braunstein and Pirandola}(2012)}]{Braunstein12}
\bibinfo{author}{\bibfnamefont{S.~L.} \bibnamefont{Braunstein}}
  \bibnamefont{and}
  \bibinfo{author}{\bibfnamefont{S.}~\bibnamefont{Pirandola}},
  \bibinfo{journal}{Phys. Rev. Lett.} \textbf{\bibinfo{volume}{108}},
  \bibinfo{pages}{130502} (\bibinfo{year}{2012}),
  \urlprefix\url{https://link.aps.org/doi/10.1103/PhysRevLett.108.130502}.

\bibitem[{\citenamefont{Lo et~al.}(2012)\citenamefont{Lo, Curty, and
  Qi}}]{Lo12}
\bibinfo{author}{\bibfnamefont{H.-K.} \bibnamefont{Lo}},
  \bibinfo{author}{\bibfnamefont{M.}~\bibnamefont{Curty}}, \bibnamefont{and}
  \bibinfo{author}{\bibfnamefont{B.}~\bibnamefont{Qi}}, \bibinfo{journal}{Phys.
  Rev. Lett.} \textbf{\bibinfo{volume}{108}}, \bibinfo{pages}{130503}
  (\bibinfo{year}{2012}),
  \urlprefix\url{https://link.aps.org/doi/10.1103/PhysRevLett.108.130503}.

\bibitem[{\citenamefont{Rubenok et~al.}(2013)\citenamefont{Rubenok, Slater,
  Chan, Lucio-Martinez, and Tittel}}]{Rubenok13}
\bibinfo{author}{\bibfnamefont{A.}~\bibnamefont{Rubenok}},
  \bibinfo{author}{\bibfnamefont{J.~A.} \bibnamefont{Slater}},
  \bibinfo{author}{\bibfnamefont{P.}~\bibnamefont{Chan}},
  \bibinfo{author}{\bibfnamefont{I.}~\bibnamefont{Lucio-Martinez}},
  \bibnamefont{and} \bibinfo{author}{\bibfnamefont{W.}~\bibnamefont{Tittel}},
  \bibinfo{journal}{Phys. Rev. Lett.} \textbf{\bibinfo{volume}{111}},
  \bibinfo{pages}{130501} (\bibinfo{year}{2013}),
  \urlprefix\url{https://link.aps.org/doi/10.1103/PhysRevLett.111.130501}.

\bibitem[{\citenamefont{Ferreira~da Silva
  et~al.}(2013)\citenamefont{Ferreira~da Silva, Vitoreti, Xavier, do~Amaral,
  Tempor\~ao, and von~der Weid}}]{Ferreira13}
\bibinfo{author}{\bibfnamefont{T.}~\bibnamefont{Ferreira~da Silva}},
  \bibinfo{author}{\bibfnamefont{D.}~\bibnamefont{Vitoreti}},
  \bibinfo{author}{\bibfnamefont{G.~B.} \bibnamefont{Xavier}},
  \bibinfo{author}{\bibfnamefont{G.~C.} \bibnamefont{do~Amaral}},
  \bibinfo{author}{\bibfnamefont{G.~P.} \bibnamefont{Tempor\~ao}},
  \bibnamefont{and} \bibinfo{author}{\bibfnamefont{J.~P.} \bibnamefont{von~der
  Weid}}, \bibinfo{journal}{Phys. Rev. A} \textbf{\bibinfo{volume}{88}},
  \bibinfo{pages}{052303} (\bibinfo{year}{2013}),
  \urlprefix\url{https://link.aps.org/doi/10.1103/PhysRevA.88.052303}.

\bibitem[{\citenamefont{Liu et~al.}(2013)\citenamefont{Liu, Chen, Wang, Liang,
  Shentu, Wang, Cui, Yin, Liu, Li et~al.}}]{Liu13}
\bibinfo{author}{\bibfnamefont{Y.}~\bibnamefont{Liu}},
  \bibinfo{author}{\bibfnamefont{T.-Y.} \bibnamefont{Chen}},
  \bibinfo{author}{\bibfnamefont{L.-J.} \bibnamefont{Wang}},
  \bibinfo{author}{\bibfnamefont{H.}~\bibnamefont{Liang}},
  \bibinfo{author}{\bibfnamefont{G.-L.} \bibnamefont{Shentu}},
  \bibinfo{author}{\bibfnamefont{J.}~\bibnamefont{Wang}},
  \bibinfo{author}{\bibfnamefont{K.}~\bibnamefont{Cui}},
  \bibinfo{author}{\bibfnamefont{H.-L.} \bibnamefont{Yin}},
  \bibinfo{author}{\bibfnamefont{N.-L.} \bibnamefont{Liu}},
  \bibinfo{author}{\bibfnamefont{L.}~\bibnamefont{Li}}, \bibnamefont{et~al.},
  \bibinfo{journal}{Phys. Rev. Lett.} \textbf{\bibinfo{volume}{111}},
  \bibinfo{pages}{130502} (\bibinfo{year}{2013}),
  \urlprefix\url{https://link.aps.org/doi/10.1103/PhysRevLett.111.130502}.

\bibitem[{\citenamefont{Tang et~al.}(2014)\citenamefont{Tang, Liao, Xu, Qi,
  Qian, and Lo}}]{Tang14}
\bibinfo{author}{\bibfnamefont{Z.}~\bibnamefont{Tang}},
  \bibinfo{author}{\bibfnamefont{Z.}~\bibnamefont{Liao}},
  \bibinfo{author}{\bibfnamefont{F.}~\bibnamefont{Xu}},
  \bibinfo{author}{\bibfnamefont{B.}~\bibnamefont{Qi}},
  \bibinfo{author}{\bibfnamefont{L.}~\bibnamefont{Qian}}, \bibnamefont{and}
  \bibinfo{author}{\bibfnamefont{H.-K.} \bibnamefont{Lo}},
  \bibinfo{journal}{Phys. Rev. Lett.} \textbf{\bibinfo{volume}{112}},
  \bibinfo{pages}{190503} (\bibinfo{year}{2014}),
  \urlprefix\url{https://link.aps.org/doi/10.1103/PhysRevLett.112.190503}.

\bibitem[{\citenamefont{Comandar et~al.}(2016)\citenamefont{Comandar,
  Lucamarini, and Fröhlich}}]{Comandar16}
\bibinfo{author}{\bibfnamefont{L.}~\bibnamefont{Comandar}},
  \bibinfo{author}{\bibfnamefont{M.}~\bibnamefont{Lucamarini}},
  \bibnamefont{and} \bibinfo{author}{\bibfnamefont{B.~e.~a.}
  \bibnamefont{Fröhlich}}, \bibinfo{journal}{Nature Photon}
  \textbf{\bibinfo{volume}{10}}, \bibinfo{pages}{312} (\bibinfo{year}{2016}),
  \urlprefix\url{https://www.nature.com/articles/nphoton.2016.50}.

\bibitem[{\citenamefont{Clauser et~al.}(1969)\citenamefont{Clauser, Horne,
  Shimony, and Holt}}]{CHSH69}
\bibinfo{author}{\bibfnamefont{J.~F.} \bibnamefont{Clauser}},
  \bibinfo{author}{\bibfnamefont{M.~A.} \bibnamefont{Horne}},
  \bibinfo{author}{\bibfnamefont{A.}~\bibnamefont{Shimony}}, \bibnamefont{and}
  \bibinfo{author}{\bibfnamefont{R.~A.} \bibnamefont{Holt}},
  \bibinfo{journal}{Physical review letters} \textbf{\bibinfo{volume}{23}},
  \bibinfo{pages}{880} (\bibinfo{year}{1969}).

\bibitem[{\citenamefont{Kaniewski}(2016)}]{Kaniewski16}
\bibinfo{author}{\bibfnamefont{J.~m.~k.} \bibnamefont{Kaniewski}},
  \bibinfo{journal}{Phys. Rev. Lett.} \textbf{\bibinfo{volume}{117}},
  \bibinfo{pages}{070402} (\bibinfo{year}{2016}),
  \urlprefix\url{https://link.aps.org/doi/10.1103/PhysRevLett.117.070402}.

\bibitem[{\citenamefont{Valcarce et~al.}(2020)\citenamefont{Valcarce, Sekatski,
  Orsucci, Oudot, Bancal, and Sangouard}}]{Valcarce2020}
\bibinfo{author}{\bibfnamefont{X.}~\bibnamefont{Valcarce}},
  \bibinfo{author}{\bibfnamefont{P.}~\bibnamefont{Sekatski}},
  \bibinfo{author}{\bibfnamefont{D.}~\bibnamefont{Orsucci}},
  \bibinfo{author}{\bibfnamefont{E.}~\bibnamefont{Oudot}},
  \bibinfo{author}{\bibfnamefont{J.-D.} \bibnamefont{Bancal}},
  \bibnamefont{and}
  \bibinfo{author}{\bibfnamefont{N.}~\bibnamefont{Sangouard}},
  \bibinfo{journal}{{Quantum}} \textbf{\bibinfo{volume}{4}},
  \bibinfo{pages}{246} (\bibinfo{year}{2020}), ISSN \bibinfo{issn}{2521-327X},
  \urlprefix\url{https://doi.org/10.22331/q-2020-03-23-246}.

\bibitem[{\citenamefont{Pironio et~al.}(2009)\citenamefont{Pironio, Acin,
  Brunner, Gisin, Massar, and Scarani}}]{Pironio09}
\bibinfo{author}{\bibfnamefont{S.}~\bibnamefont{Pironio}},
  \bibinfo{author}{\bibfnamefont{A.}~\bibnamefont{Acin}},
  \bibinfo{author}{\bibfnamefont{N.}~\bibnamefont{Brunner}},
  \bibinfo{author}{\bibfnamefont{N.}~\bibnamefont{Gisin}},
  \bibinfo{author}{\bibfnamefont{S.}~\bibnamefont{Massar}}, \bibnamefont{and}
  \bibinfo{author}{\bibfnamefont{V.}~\bibnamefont{Scarani}},
  \bibinfo{journal}{New Journal of Physics} \textbf{\bibinfo{volume}{11}},
  \bibinfo{pages}{045021} (\bibinfo{year}{2009}).

\bibitem[{\citenamefont{Vazirani and Vidick}(2014)}]{Vazirani14}
\bibinfo{author}{\bibfnamefont{U.}~\bibnamefont{Vazirani}} \bibnamefont{and}
  \bibinfo{author}{\bibfnamefont{T.}~\bibnamefont{Vidick}},
  \bibinfo{journal}{Phys. Rev. Lett.} \textbf{\bibinfo{volume}{113}},
  \bibinfo{pages}{140501} (\bibinfo{year}{2014}),
  \urlprefix\url{https://link.aps.org/doi/10.1103/PhysRevLett.113.140501}.

\bibitem[{\citenamefont{Arnon-Friedman
  et~al.}(2018)\citenamefont{Arnon-Friedman, Dupuis, Fawzi, Renner, and
  Vidick}}]{ArnonFriedman18}
\bibinfo{author}{\bibfnamefont{R.}~\bibnamefont{Arnon-Friedman}},
  \bibinfo{author}{\bibfnamefont{F.}~\bibnamefont{Dupuis}},
  \bibinfo{author}{\bibfnamefont{O.}~\bibnamefont{Fawzi}},
  \bibinfo{author}{\bibfnamefont{R.}~\bibnamefont{Renner}}, \bibnamefont{and}
  \bibinfo{author}{\bibfnamefont{T.}~\bibnamefont{Vidick}},
  \bibinfo{journal}{Nature Communications} \textbf{\bibinfo{volume}{9}},
  \bibinfo{pages}{459} (\bibinfo{year}{2018}).

\bibitem[{\citenamefont{Renner et~al.}(2005)\citenamefont{Renner, Gisin, and
  Kraus}}]{Renner05a}
\bibinfo{author}{\bibfnamefont{R.}~\bibnamefont{Renner}},
  \bibinfo{author}{\bibfnamefont{N.}~\bibnamefont{Gisin}}, \bibnamefont{and}
  \bibinfo{author}{\bibfnamefont{B.}~\bibnamefont{Kraus}},
  \bibinfo{journal}{Physical Review A} \textbf{\bibinfo{volume}{72}},
  \bibinfo{pages}{012332} (\bibinfo{year}{2005}).

\bibitem[{\citenamefont{Kraus et~al.}(2005)\citenamefont{Kraus, Gisin, and
  Renner}}]{Renner05b}
\bibinfo{author}{\bibfnamefont{B.}~\bibnamefont{Kraus}},
  \bibinfo{author}{\bibfnamefont{N.}~\bibnamefont{Gisin}}, \bibnamefont{and}
  \bibinfo{author}{\bibfnamefont{R.}~\bibnamefont{Renner}},
  \bibinfo{journal}{Phys. Rev. Lett.} \textbf{\bibinfo{volume}{95}},
  \bibinfo{pages}{080501} (\bibinfo{year}{2005}),
  \urlprefix\url{https://link.aps.org/doi/10.1103/PhysRevLett.95.080501}.

\bibitem[{\citenamefont{Renes and Smith}(2007)}]{Renes07}
\bibinfo{author}{\bibfnamefont{J.~M.} \bibnamefont{Renes}} \bibnamefont{and}
  \bibinfo{author}{\bibfnamefont{G.}~\bibnamefont{Smith}},
  \bibinfo{journal}{Phys. Rev. Lett.} \textbf{\bibinfo{volume}{98}},
  \bibinfo{pages}{020502} (\bibinfo{year}{2007}),
  \urlprefix\url{https://link.aps.org/doi/10.1103/PhysRevLett.98.020502}.

\bibitem[{\citenamefont{Vivoli et~al.}(2015)\citenamefont{Vivoli, Sekatski,
  Bancal, Lim, Christensen, Martin, Thew, Zbinden, Gisin, and
  Sangouard}}]{CapraraVivoli15}
\bibinfo{author}{\bibfnamefont{V.~C.} \bibnamefont{Vivoli}},
  \bibinfo{author}{\bibfnamefont{P.}~\bibnamefont{Sekatski}},
  \bibinfo{author}{\bibfnamefont{J.-D.} \bibnamefont{Bancal}},
  \bibinfo{author}{\bibfnamefont{C.}~\bibnamefont{Lim}},
  \bibinfo{author}{\bibfnamefont{B.}~\bibnamefont{Christensen}},
  \bibinfo{author}{\bibfnamefont{A.}~\bibnamefont{Martin}},
  \bibinfo{author}{\bibfnamefont{R.}~\bibnamefont{Thew}},
  \bibinfo{author}{\bibfnamefont{H.}~\bibnamefont{Zbinden}},
  \bibinfo{author}{\bibfnamefont{N.}~\bibnamefont{Gisin}}, \bibnamefont{and}
  \bibinfo{author}{\bibfnamefont{N.}~\bibnamefont{Sangouard}},
  \bibinfo{journal}{Physical Review A} \textbf{\bibinfo{volume}{91}},
  \bibinfo{pages}{012107} (\bibinfo{year}{2015}).

\bibitem[{\citenamefont{Devetak and Winter}(2005)}]{DevetakWinter}
\bibinfo{author}{\bibfnamefont{I.}~\bibnamefont{Devetak}} \bibnamefont{and}
  \bibinfo{author}{\bibfnamefont{A.}~\bibnamefont{Winter}},
  \bibinfo{journal}{Proc. R. Soc. Lond. A} \textbf{\bibinfo{volume}{461}},
  \bibinfo{pages}{207} (\bibinfo{year}{2005}),
  \urlprefix\url{https://doi.org/10.1098/rspa.2004.1372}.

\bibitem[{\citenamefont{{\v{S}}upi{\'c} and Bowles}(2019)}]{Supic19}
\bibinfo{author}{\bibfnamefont{I.}~\bibnamefont{{\v{S}}upi{\'c}}}
  \bibnamefont{and} \bibinfo{author}{\bibfnamefont{J.}~\bibnamefont{Bowles}},
  \bibinfo{journal}{arXiv preprint arXiv:1904.10042}  (\bibinfo{year}{2019}).

\bibitem[{\citenamefont{Giustina et~al.}(2013)\citenamefont{Giustina, Mech,
  Ramelow, Wittmann, Kofler, Beyer, Lita, Calkins, Gerrits, Nam
  et~al.}}]{Giustina13}
\bibinfo{author}{\bibfnamefont{M.}~\bibnamefont{Giustina}},
  \bibinfo{author}{\bibfnamefont{A.}~\bibnamefont{Mech}},
  \bibinfo{author}{\bibfnamefont{S.}~\bibnamefont{Ramelow}},
  \bibinfo{author}{\bibfnamefont{B.}~\bibnamefont{Wittmann}},
  \bibinfo{author}{\bibfnamefont{J.}~\bibnamefont{Kofler}},
  \bibinfo{author}{\bibfnamefont{J.}~\bibnamefont{Beyer}},
  \bibinfo{author}{\bibfnamefont{A.}~\bibnamefont{Lita}},
  \bibinfo{author}{\bibfnamefont{B.}~\bibnamefont{Calkins}},
  \bibinfo{author}{\bibfnamefont{T.}~\bibnamefont{Gerrits}},
  \bibinfo{author}{\bibfnamefont{S.~W.} \bibnamefont{Nam}},
  \bibnamefont{et~al.}, \bibinfo{journal}{Nature}
  \textbf{\bibinfo{volume}{497}}, \bibinfo{pages}{227} (\bibinfo{year}{2013}).

\bibitem[{\citenamefont{Christensen et~al.}(2013)\citenamefont{Christensen,
  McCusker, Altepeter, Calkins, Gerrits, Lita, Miller, Shalm, Zhang, Nam
  et~al.}}]{Christensen13}
\bibinfo{author}{\bibfnamefont{B.}~\bibnamefont{Christensen}},
  \bibinfo{author}{\bibfnamefont{K.}~\bibnamefont{McCusker}},
  \bibinfo{author}{\bibfnamefont{J.}~\bibnamefont{Altepeter}},
  \bibinfo{author}{\bibfnamefont{B.}~\bibnamefont{Calkins}},
  \bibinfo{author}{\bibfnamefont{T.}~\bibnamefont{Gerrits}},
  \bibinfo{author}{\bibfnamefont{A.~E.} \bibnamefont{Lita}},
  \bibinfo{author}{\bibfnamefont{A.}~\bibnamefont{Miller}},
  \bibinfo{author}{\bibfnamefont{L.~K.} \bibnamefont{Shalm}},
  \bibinfo{author}{\bibfnamefont{Y.}~\bibnamefont{Zhang}},
  \bibinfo{author}{\bibfnamefont{S.}~\bibnamefont{Nam}}, \bibnamefont{et~al.},
  \bibinfo{journal}{Physical review letters} \textbf{\bibinfo{volume}{111}},
  \bibinfo{pages}{130406} (\bibinfo{year}{2013}).

\bibitem[{\citenamefont{Shalm et~al.}(2015)\citenamefont{Shalm, Meyer-Scott,
  Christensen, Bierhorst, Wayne, Stevens, Gerrits, Glancy, Hamel, Allman
  et~al.}}]{Shalm15}
\bibinfo{author}{\bibfnamefont{L.~K.} \bibnamefont{Shalm}},
  \bibinfo{author}{\bibfnamefont{E.}~\bibnamefont{Meyer-Scott}},
  \bibinfo{author}{\bibfnamefont{B.~G.} \bibnamefont{Christensen}},
  \bibinfo{author}{\bibfnamefont{P.}~\bibnamefont{Bierhorst}},
  \bibinfo{author}{\bibfnamefont{M.~A.} \bibnamefont{Wayne}},
  \bibinfo{author}{\bibfnamefont{M.~J.} \bibnamefont{Stevens}},
  \bibinfo{author}{\bibfnamefont{T.}~\bibnamefont{Gerrits}},
  \bibinfo{author}{\bibfnamefont{S.}~\bibnamefont{Glancy}},
  \bibinfo{author}{\bibfnamefont{D.~R.} \bibnamefont{Hamel}},
  \bibinfo{author}{\bibfnamefont{M.~S.} \bibnamefont{Allman}},
  \bibnamefont{et~al.}, \bibinfo{journal}{Physical review letters}
  \textbf{\bibinfo{volume}{115}}, \bibinfo{pages}{250402}
  (\bibinfo{year}{2015}).

\bibitem[{\citenamefont{Giustina et~al.}(2015)\citenamefont{Giustina,
  Versteegh, Wengerowsky, Handsteiner, Hochrainer, Phelan, Steinlechner,
  Kofler, Larsson, Abell{\'a}n et~al.}}]{Giustina15}
\bibinfo{author}{\bibfnamefont{M.}~\bibnamefont{Giustina}},
  \bibinfo{author}{\bibfnamefont{M.~A.} \bibnamefont{Versteegh}},
  \bibinfo{author}{\bibfnamefont{S.}~\bibnamefont{Wengerowsky}},
  \bibinfo{author}{\bibfnamefont{J.}~\bibnamefont{Handsteiner}},
  \bibinfo{author}{\bibfnamefont{A.}~\bibnamefont{Hochrainer}},
  \bibinfo{author}{\bibfnamefont{K.}~\bibnamefont{Phelan}},
  \bibinfo{author}{\bibfnamefont{F.}~\bibnamefont{Steinlechner}},
  \bibinfo{author}{\bibfnamefont{J.}~\bibnamefont{Kofler}},
  \bibinfo{author}{\bibfnamefont{J.-{\AA}.} \bibnamefont{Larsson}},
  \bibinfo{author}{\bibfnamefont{C.}~\bibnamefont{Abell{\'a}n}},
  \bibnamefont{et~al.}, \bibinfo{journal}{Physical review letters}
  \textbf{\bibinfo{volume}{115}}, \bibinfo{pages}{250401}
  (\bibinfo{year}{2015}).

\bibitem[{\citenamefont{Shen et~al.}(2018)\citenamefont{Shen, Lee, Bancal,
  Cer{\`e}, Lamas-Linares, Lita, Gerrits, Nam, Scarani, Kurtsiefer
  et~al.}}]{Shen18}
\bibinfo{author}{\bibfnamefont{L.}~\bibnamefont{Shen}},
  \bibinfo{author}{\bibfnamefont{J.}~\bibnamefont{Lee}},
  \bibinfo{author}{\bibfnamefont{J.-D.} \bibnamefont{Bancal}},
  \bibinfo{author}{\bibfnamefont{A.}~\bibnamefont{Cer{\`e}}},
  \bibinfo{author}{\bibfnamefont{A.}~\bibnamefont{Lamas-Linares}},
  \bibinfo{author}{\bibfnamefont{A.}~\bibnamefont{Lita}},
  \bibinfo{author}{\bibfnamefont{T.}~\bibnamefont{Gerrits}},
  \bibinfo{author}{\bibfnamefont{S.~W.} \bibnamefont{Nam}},
  \bibinfo{author}{\bibfnamefont{V.}~\bibnamefont{Scarani}},
  \bibinfo{author}{\bibfnamefont{C.}~\bibnamefont{Kurtsiefer}},
  \bibnamefont{et~al.}, \bibinfo{journal}{Physical review letters}
  \textbf{\bibinfo{volume}{121}}, \bibinfo{pages}{150402}
  (\bibinfo{year}{2018}).

\bibitem[{\citenamefont{Liu et~al.}(2018)\citenamefont{Liu, Zhao, Li, Guan,
  Zhang, Bai, Zhang, Liu, Wu, Yuan et~al.}}]{Liu18}
\bibinfo{author}{\bibfnamefont{Y.}~\bibnamefont{Liu}},
  \bibinfo{author}{\bibfnamefont{Q.}~\bibnamefont{Zhao}},
  \bibinfo{author}{\bibfnamefont{M.-H.} \bibnamefont{Li}},
  \bibinfo{author}{\bibfnamefont{J.-Y.} \bibnamefont{Guan}},
  \bibinfo{author}{\bibfnamefont{Y.}~\bibnamefont{Zhang}},
  \bibinfo{author}{\bibfnamefont{B.}~\bibnamefont{Bai}},
  \bibinfo{author}{\bibfnamefont{W.}~\bibnamefont{Zhang}},
  \bibinfo{author}{\bibfnamefont{W.-Z.} \bibnamefont{Liu}},
  \bibinfo{author}{\bibfnamefont{C.}~\bibnamefont{Wu}},
  \bibinfo{author}{\bibfnamefont{X.}~\bibnamefont{Yuan}}, \bibnamefont{et~al.},
  \bibinfo{journal}{Nature} \textbf{\bibinfo{volume}{562}},
  \bibinfo{pages}{548} (\bibinfo{year}{2018}).

\bibitem[{\citenamefont{Ma and Lütkenhaus}(2012)}]{Ma12}
\bibinfo{author}{\bibfnamefont{X.}~\bibnamefont{Ma}} \bibnamefont{and}
  \bibinfo{author}{\bibfnamefont{N.}~\bibnamefont{Lütkenhaus}},
  \bibinfo{journal}{Quantum Inf. Comput.} \textbf{\bibinfo{volume}{12}},
  \bibinfo{pages}{203} (\bibinfo{year}{2012}),
  \urlprefix\url{http://www.rintonpress.com/xxqic12/qic-12-34/0203-0214.pdf}.

\bibitem[{\citenamefont{Brunner et~al.}(2014)\citenamefont{Brunner, Cavalcanti,
  Pironio, Scarani, and Wehner}}]{Brunner14}
\bibinfo{author}{\bibfnamefont{N.}~\bibnamefont{Brunner}},
  \bibinfo{author}{\bibfnamefont{D.}~\bibnamefont{Cavalcanti}},
  \bibinfo{author}{\bibfnamefont{S.}~\bibnamefont{Pironio}},
  \bibinfo{author}{\bibfnamefont{V.}~\bibnamefont{Scarani}}, \bibnamefont{and}
  \bibinfo{author}{\bibfnamefont{S.}~\bibnamefont{Wehner}},
  \bibinfo{journal}{Rev. Mod. Phys.} \textbf{\bibinfo{volume}{86}},
  \bibinfo{pages}{419} (\bibinfo{year}{2014}),
  \urlprefix\url{https://link.aps.org/doi/10.1103/RevModPhys.86.419}.

\bibitem[{\citenamefont{Scarani and Renner}(2008)}]{Scarani08}
\bibinfo{author}{\bibfnamefont{V.}~\bibnamefont{Scarani}} \bibnamefont{and}
  \bibinfo{author}{\bibfnamefont{R.}~\bibnamefont{Renner}}, in
  \emph{\bibinfo{booktitle}{{Theory of Quantum Computation, Communication, and
  Cryptography}}}, edited by
  \bibinfo{editor}{\bibfnamefont{Y.}~\bibnamefont{Kawano}} \bibnamefont{and}
  \bibinfo{editor}{\bibfnamefont{M.}~\bibnamefont{Mosca}}
  (\bibinfo{publisher}{Springer Berlin Heidelberg}, \bibinfo{address}{Berlin,
  Heidelberg}, \bibinfo{year}{2008}), pp. \bibinfo{pages}{83--95}, ISBN
  \bibinfo{isbn}{978-3-540-89304-2}.

\bibitem[{\citenamefont{Tan et~al.}(2019)\citenamefont{Tan, Schwonnek, Goh,
  Primaatmaja, and Lim}}]{Tan19}
\bibinfo{author}{\bibfnamefont{E.~Y.-Z.} \bibnamefont{Tan}},
  \bibinfo{author}{\bibfnamefont{R.}~\bibnamefont{Schwonnek}},
  \bibinfo{author}{\bibfnamefont{K.~T.} \bibnamefont{Goh}},
  \bibinfo{author}{\bibfnamefont{I.~W.} \bibnamefont{Primaatmaja}},
  \bibnamefont{and} \bibinfo{author}{\bibfnamefont{C.~C.-W.}
  \bibnamefont{Lim}}, \bibinfo{journal}{arXiv preprint arXiv:1908.11372}
  (\bibinfo{year}{2019}).

\bibitem[{\citenamefont{Scarani}(2012)}]{Scarani12}
\bibinfo{author}{\bibfnamefont{V.}~\bibnamefont{Scarani}},
  \bibinfo{journal}{Acta Physica Slovaca} \textbf{\bibinfo{volume}{62}},
  \bibinfo{pages}{347} (\bibinfo{year}{2012}).

\bibitem[{\citenamefont{Topsøe}(2004)}]{Flemming04}
\bibinfo{author}{\bibfnamefont{F.}~\bibnamefont{Topsøe}},
  \bibinfo{journal}{Research report collection} \textbf{\bibinfo{volume}{7}},
  \bibinfo{pages}{2} (\bibinfo{year}{2004}).

\bibitem[{\citenamefont{Horodecki et~al.}(1995)\citenamefont{Horodecki,
  Horodecki, and Horodecki}}]{horodecki1995violating}
\bibinfo{author}{\bibfnamefont{R.}~\bibnamefont{Horodecki}},
  \bibinfo{author}{\bibfnamefont{P.}~\bibnamefont{Horodecki}},
  \bibnamefont{and}
  \bibinfo{author}{\bibfnamefont{M.}~\bibnamefont{Horodecki}},
  \bibinfo{journal}{Physics Letters A} \textbf{\bibinfo{volume}{200}},
  \bibinfo{pages}{340} (\bibinfo{year}{1995}).

\end{thebibliography}

\end{document}